\documentclass{article}

\usepackage{arxiv}
\usepackage[utf8]{inputenc} 
\usepackage[T1]{fontenc}    
\usepackage{hyperref}       
\usepackage{url}            
\usepackage{booktabs}       
\usepackage{amsfonts}       
\usepackage{amsmath}        
\usepackage{amsthm}         
\usepackage{microtype}  
\usepackage{graphicx}
\usepackage{natbib}
\usepackage{doi}
\usepackage{setspace}
\usepackage{multirow}
\usepackage[title]{appendix}

\title{Causally-interpretable meta-analysis using aggregate data}

\author{Qingyang Shi \\
        Faculty of Science and Engineering, \\
        University of Groningen \\
	    \And
        Wouter van Amsterdam \\
        Julius Center for Health Sciences and Primary Care, \\
        University Medical Center Utrecht \\
        \And
        Sacha la Bastide-van Gemert \\
        Department of Epidemiology, \\
        University Medical Center Groningen \\
        \And
        Talitha Feenstra \\
        Groningen Research Institute of Pharmacy, \\
        University of Groningen \\
        Dutch National Institute of Public Health and the Environment (RIVM) \\     
        \And
        Issa J. Dahabreh \\
        CAUSALab, Department of Epidemiology, \\
        Department of Biostatistics, \\
        Harvard T.H. Chan School of Public Health \\
}

\date{\today}

\renewcommand{\headeright}{Version Apr 2026}
\renewcommand{\undertitle}{First Version: Feb 2026; Current Version: Apr 2026}
\renewcommand{\shorttitle}{CIMAgD}

\newtheorem{Proposition}{Proposition}

\newtheorem{Lemma}{Lemma}
\newtheorem{Corollary}{Corollary}

\hypersetup{
pdftitle={Causally-interpretable meta-analysis using aggregate data},
pdfsubject={stat.ME},
pdfauthor={Qingyang Shi, Wouter van Amsterdam, Sacha la Bastide-van Gemert, Talitha Feenstra, and Issa J. Dahabreh},
pdfkeywords={Meta-analysis, Aggregate data, Summary statistics, Causal inference, Transportability, Conditional average treatment effects}
}

\begin{document}
\begin{doublespace}

\maketitle

\begin{abstract}
Evidence syntheses and meta-analyses are used to inform clinical practice guidelines and health economic evaluations. However, heterogeneity of treatment effects poses a significant challenge. Conventional meta-analysis addresses heterogeneity through random-effect assumptions, which are not supported by design and lead to estimates that may not apply to any real-world population. Causally-interpretable meta-analysis (CIMA) offers a rigorous framework for specification, identification, and estimation of causal effects when combining information from multiple randomized trials. Initial development of CIMA focused on using individual data from randomized trials, but such data are often unavailable in practice. Here, we propose a new version of CIMA that only requires aggregate data from trials, addressing the limitations of traditional meta-analysis methods while relying only on aggregate data. The method leverages the trials' reported estimates of marginal and one-at-a-time subgroup treatment effects and descriptive statistics for baseline covariates to build moment equations for identifying and estimating a parametric conditional average treatment effect (CATE) function. The average treatment effect in a new target population is obtained by marginalizing the CATE function over the individual covariate data that defines the target population. The method can also be used to obtain causally-interpretable indirect treatment comparisons in the target population. We establish the asymptotic properties of the method, assess its finite-sample performance in simulation studies, and illustrate the application of the method by re-analyzing a published meta-analysis for SGLT2 inhibitors in patients with heart failure.
\end{abstract}

\keywords{Meta-analysis, Aggregate data, Summary statistics, Causal inference, Transportability, Conditional average treatment effects}

\section{INTRODUCTION}

Evidence syntheses and meta-analyses are conducted to inform clinical practice guidelines and health economic evaluations by estimating summary results that integrate treatment effect estimates from multiple randomized trials. Heterogeneity of treatment effects poses a major challenge for evidence synthesis. Conventional random-effect meta-analysis attempts to account for heterogeneity through a random-effect assumption —- most often by imposing some distributional assumption about the ``population'' of effects across trials \citep{riley2011interpretation}. However, these assumptions are not supported by design, and various sources of heterogeneity explain the variation in treatment effects \citep{dahabreh2020toward, dahabreh2021study}. As a result, the assumption obscures the actual drivers of heterogeneity and yields estimates that may not apply to any real-world population, rendering them inadequate for context-specific decision-making \citep{dahabreh2020toward}.

Causally-interpretable meta-analysis (CIMA) has been proposed as an approach for clearly defining causal estimands in a target population of substantive interest, for expressing different identification assumptions that link the causal estimand with the observed data, and for using the observed data to estimate target population estimands, including using robust and efficient methods \citep{dahabreh2020toward, dahabreh2023efficient}. Importantly, CIMA allows for the often unknown and often ill-defined populations underlying randomized trials and focuses on estimands that apply to the target population of interest. Initial development of CIMA focused on using individual data collected from all trials. However, individual patient data is often not available or not accessible due to privacy concerns, limiting the scope of applications of CIMA. Aggregate (summary) data extracted from published trial reports, on the other hand, have been used in the conventional meta-analysis for decades due to their wide availability, facilitating the widespread adoption of conventional meta-analysis methods, despite their well-known limitations.

Here, we propose a new approach to estimate the average treatment effect in the target population by transporting treatment effects from trials by learning a parametric approximation to the conditional average treatment effect (CATE) function from aggregate data. To identify the CATE function using aggregate data, we leverage information from both marginal and one-at-a-time subgroup treatment effects from trials, and descriptive summary statistics of baseline covariates -- pieces of information that are routinely available in published reports of randomized trial results. For identifiability and stable estimation, we assume a parametric specification for the CATE function and construct moment equations to estimate its parameters via the generalized method of moments (GMM). We then obtain the average treatment effect in the target population by marginalizing the CATE function over the target population covariate distribution. We also show that the method can be immediately extended to causally-interpretable indirect treatment comparison in the target population. We establish the asymptotic properties of this estimation procedure and conduct a simulation study to assess its finite-sample performance. We also illustrate the application of our methods to a published conventional meta-analysis of SGLT2 inhibitors in patients with heart failure \citep{vaduganathan2022sglt2}. Last, we implement the methods in a new \texttt{R} package \texttt{CIMAgD}, available at: \url{https://github.com/qingyshi/CIMAgD}.

\section{METHODS}

\subsection{Setup}

As in CIMA using individual participant data \citet{dahabreh2023efficient}, we consider a collection of separately conducted randomized trials, indexed by $s \in \mathcal{S}_R = \{1, \ldots, m\}$. In each trial with $S = s$, $n_s$ participants from whom we collect data on a binary treatment $A_i = a \in \{0, 1\}$, a vector of the baseline covariates $X_i$ representing the patient's characteristics (e.g., age, sex, body mass index, disease severity, etc.), and an outcome of interest $Y_i$. Conditional on trial $S=s$, we model the data as independent and identically distributed draws, $O_{i,s} = (Y_i, A_i, X_i, S_i = s) \sim P_s$, where the only restriction on $P_s$ is the random assignment of treatment. In addition, we consider a target population of interest, indexed by $S = 0$, from which we observe independent and identically distributed draws $O_{i,0} = (X_i, S_i = 0) \sim P_0$, from an unspecified distribution $P_0$.

Let $Y_i(a)$ be the potential outcome of intervention to set treatment $A$ to $a \in \{ 0, 1 \}$. The estimand of interest is the average treatment effect (ATE) in the target population: $\psi = \mathbb{E}[Y(1) - Y(0) | S = 0]$. We may also be interested in subgroup-specific ATEs in the target population, for which we present results in Appendix \ref{appendix1}.

Let $g_s(x) \equiv \mathbb{E}[Y(1) - Y(0)|X = x, S = s]$ be the conditional average treatment effect (CATE) functions for each $s \in \{0, \ldots, m\}$. We proceed with two assumptions:

\textit{A1. Transportability of CATE function across the trials and the target population}: $g_s(x) = g(x) \equiv \mathbb{E} [Y(1) - Y(0) | X = x]$ for every $s = 0, \ldots, m$ and every $x \in \{x: p(x, S = s) > 0\}$.

\textit{A2. Population overlap}: 
\[
\{x : p(x, S \in \mathcal{S}_R) > 0\} \supseteq \{x : p(x, S = 0) > 0\}.
\]

Under the assumptions \textit{A1} and \textit{A2}, we express the ATE in the target population as:
\[
\psi = \mathbb{E} [Y(1) - Y(0) | S = 0] = \mathbb{E} [g(X) | S = 0],
\]
where $g(x)$ is the CATE function.

It is well-known that the common CATE function can be identified from randomized trials and estimated using individual participant data. We address the setting in which only aggregate data are available from the trials. Let \(X = (X_1, \ldots, X_K)\), where \(X_k\) denotes the \(k\)-th covariate for \(k = 1, \ldots, K\). For the subgroup treatment effects reported by trials, we treat each subgroup-defining covariate as discrete. Thus, if a baseline covariate is intrinsically continuous, we assume that the reported subgroup results are based on a discretized version with finitely many categories (e.g., trial-specific subgroup definitions). This does not preclude using descriptive moments of the underlying continuous covariates, such as means and variances, in the moment-matching step.

Let $X_{k,l}$ be the $l$-th level/stratum of the $k$-th covariate. For each trial \(s\), we observe aggregate data consisting of:
\begin{enumerate}
    \item \textit{Estimated treatment effects}: the trial-level marginal average treatment effect \(\widehat \tau_{s,0,0}\), where \(k=0\) and \(l=0\) index the overall (marginal) effect, and one-at-a-time subgroup average treatment effects \(\widehat \tau_{s,k,l}\) for each stratum \(X_{k,l}\) of each covariate \(X_k\), with \(k=1, \ldots, K_s\) and \(l=1, \ldots, L_{s,k}\).
    \item \textit{Estimated covariate summary statistics}: descriptive summary measures $\widehat \mu_s = (\widehat{\mu}_{s,1}, \ldots, \widehat{\mu}_{s,R_s})$ for the covariates, where each \(\widehat{\mu}_{s,r}\) is an estimate of a population moment $\mu_{s,r}=\mathbb{E}[h_{s,r}(X)\mid S=s]$ for a fixed function $h_{s,r}(X)$ (e.g., means and variances for continuous covariates, or proportions for discrete covariates).
\end{enumerate}

We base our analyses on estimated treatment effects rather than group means, which also facilitates the incorporation of covariate-adjusted effect estimates when available. When a trial does not explicitly report treatment effects, we note that sufficient summary information (e.g., event counts or arm-specific means) is typically provided to compute the corresponding effect estimates.

For subgroup treatment effects within each trial, we allow for some subgroup-specific effects to be unreported. Specifically, in trial $s$, subgroup results are reported for $K_s$ covariates, and covariate $k \in \{1, \ldots, K_s\}$ has $L_{s,k}$ categories (strata). Let $\tau_{s,0}=\bigl(\tau_{s,0,0},\{\tau_{s,k,l}\}_{k=1,\ldots,K_s;\,l=1,\ldots,L_{s,k}}\bigr)$ denote the corresponding vector of population treatment effects, where $\tau_{s,0,0}\equiv \mathbb{E}[Y(1)-Y(0)\mid S=s]$ and $\tau_{s,k,l}\equiv \mathbb{E}[Y(1)-Y(0)\mid X_k=x_{k,l},\,S=s]$. Let $\mu_{s,0}=(\mu_{s,1},\ldots,\mu_{s,R_s})$ denote the vector of population moments, where $\mu_{s,r}\equiv \mathbb{E}[h_{s,r}(X)\mid S=s]$ for fixed functions $h_{s,r}(X)$. We require the following condition:

\textit{C1. Asymptotic linearity of estimators used to generate the data summaries}: $\widehat \tau_s - \tau_{s,0} = n_s^{-1} \sum_{i:S_i = s} \phi_s^\tau(O_{i,s}) + o_p(n_s^{-1/2})$ and $\widehat \mu_s - \mu_{s,0} = n_s^{-1} \sum_{i:S_i = s} \phi_s^\mu(X_{i,s})$, where $\phi_s^\mu(X)=h_s(X)-\mu_{s,0}$, for all $s = 1, \ldots, m$.

Here, we focus on treatment effects on the additive scale (i.e., mean/risk differences) to match the canonical additive CATE function used in identification. Our method can also apply to the relative treatment effects for a relative CATE function (see details in Appendix \ref{appendix2}).

Condition \textit{C1} is mild and is typically satisfied by estimators used in randomized trials. Nonetheless, our framework relies on standard causal assumptions holding within each trial---or holding after appropriate adjustments for missing data and/or non-adherence---including consistency, no interference, positivity, and (conditional) exchangeability. The plausibility of these assumptions must be assessed by the researchers on a trial-by-trial basis for the set of trials included in the analysis.

Condition \textit{C1} implies asymptotic normality of the estimators, yielding
\[
\mathrm{Var}(\widehat \tau_s)=n_s^{-1}\Sigma_s^\tau
\qquad \text{and} \qquad
\mathrm{Var}(\widehat \mu_s)=n_s^{-1}\Sigma_s^\mu,
\]
where
\[
\Sigma_s^\tau=\mathbb{E}_{P_s}\!\left[ \phi_s^\tau (\phi_s^\tau)^\top \right]
\qquad \text{and} \qquad
\Sigma_s^\mu=\mathbb{E}_{P_s}\!\left[ \phi_s^\mu (\phi_s^\mu)^\top \right].
\]

Let \(\widehat{\mathrm{Var}}(\widehat \tau_s)=n_s^{-1} \widehat \Sigma_s^\tau\) and \(\widehat{\mathrm{Var}}(\widehat{\mu}_s)=n_s^{-1}\widehat{\Sigma}_s^\mu\) denote the corresponding within-trial variance estimators. We consider the following supplemental condition:

\textit{C1s. Consistent variance estimators}: \(\widehat \Sigma_s^\tau = \Sigma_s^\tau + o_p(1)\) and \(\widehat \Sigma_s^\mu = \Sigma_s^\mu + o_p(1)\).

Condition \textit{C1s} is mild. In practice, however, trial reports often omit some components needed to reconstruct the full variance-covariance matrices. This missingness does not affect identification or point estimation, but it does affect variance estimation and therefore inference. In Appendix \ref{appendix6}, we propose a variance estimator that accommodates the patterns of missing variance information commonly encountered in trial reports.

\subsection{Identification}

By assumption \textit{A1}, the treatment effects reported in each trial $s=1,\ldots,m$ can be written as functionals of the CATE function $g$:
\[
\tau_{s,0,0}(g)\equiv \mathbb{E}\!\left[g(X)\mid S=s\right],
\qquad
\tau_{s,k,l}(g)\equiv \mathbb{E}\!\left[g(X)\mid X_k=x_{k,l},\,S=s\right].
\]
Let $J_s$ denote the number of treatment-effect estimands available in trial $s$, and define
\[
\tau_s(g)\equiv \bigl(\tau_{s,0,0}(g),\{\tau_{s,k,l}(g)\}_{k=1,\ldots,K_s;\,l=1,\ldots,L_{s,k}}\bigr)\in\mathbb{R}^{J_s}.
\]

Under mild regularity conditions (e.g., each reported stratum has positive probability), the components of \(\tau_s(g)\) are bounded linear functionals of \(g\). Let \(P_s\) denote the distribution of the observed data \(O\) in trial \(s\), and write the factorization
\[
P_s(O) = P_s(Y \mid A,X)\,P_s(A \mid X)\,P_s(X),
\qquad s=1, \ldots, m.
\] 
Suppose there exists a base probability measure/distribution $Q$ dominating $P_s(X)$ such that:

\textit{A3. Absolute continuity with respect to a base probability measure}: $P_s(X) \ll Q$ for every $s=1,\ldots,m$.

Let the CATE function $g$ be square-integrable, $g \in L^2(Q)$. Then, for each trial $s=1,\ldots,m$, there exist functions $\alpha_s(x)=(\alpha_{s,1}(x),\ldots,\alpha_{s,J_s}(x))$ with each $\alpha_{s,j}\in L^2(Q)$ such that
\[
\tau_s(g)=\int \alpha_s(x)\,g(x)\,dQ(x),
\]
for any $g \in L^2(Q)$, where the integral is understood componentwise. These $\alpha_s$ are the population representer functions induced by the unknown trial-specific covariate distribution $P_s(X)$.

Because the aggregate data do not identify the full trial-specific covariate distribution $P_s(X)$ without further structure, we introduce a working family of representer functions through exponential tilting and moment matching using the reported descriptive moments of the covariates \citep{cover2006maximum}. Let $\mu_{s,0}=(\mu_{s,1},\ldots,\mu_{s,R_s})$ be the bounded linear moments defined by $\mu_{s,0}=\int h_s(x)\,dP_s(x)<\infty$, where $h_s=(h_{s,1},\ldots,h_{s,R_s})$ is a vector of fixed functions. The moment-matching equations are
\[
\mathbb{E}_Q\bigl[w_s(X;\eta_{s,0})\,h_s^+(X)\bigr]-\mu_{s,0}^+=0_{(R_s+1)},
\]
where $w_s(x;\eta_s)=\exp\{\eta_s^T h_s^+(x)\}$, $\eta_s\in\mathbb{R}^{R_s+1}$, $h_s^+(x)\equiv(1,h_s(x))$, and $\mu_{s,0}^+\equiv(1,\mu_{s,0})$.

This yields the working representer functions
\[
\alpha_{s,0,0}(x;\eta_s)\equiv w_s(x;\eta_s),
\qquad
\alpha_{s,k,l}(x;\eta_s)\equiv \frac{w_s(x;\eta_s)I(X_k=x_{k,l})}{\mathbb{E}_Q[w_s(X;\eta_s)I(X_k=x_{k,l})]},
\]
for all $s=1,\ldots,m$, $k=1,\ldots,K_s$, and $l=1,\ldots,L_{s,k}$.

Exponential tilting is a well-established method for moment matching. In our setting, we use it to incorporate the information of descriptive moments of the covariates. Importantly, we do not require the exponential-tilting \(\alpha_s(\eta_s)\) to be correctly specified; that is, the true function \(\alpha_{s,0}\) need not lie in the parametric family \(\{\alpha_s(\eta_s): \eta_s \in \mathbb{R}^{(R_s+1)}\}\). Rather, we use exponential tilting as a convenient device to aggregate information from the reported descriptive moments in a unified manner.

For each trial $s$, let $\tau_{s,0} = (\tau_{s,1}, \ldots, \tau_{s,J_s})$ denote the vector of true treatment effects. Stacking across trials $s=1,\ldots,m$ yields
\[
\tau_0 = (\tau_{1,0}, \ldots, \tau_{m,0}) \in \mathbb{R}^J,
\qquad
J = \sum_{s=1}^m J_s.
\]
Let $\alpha_{s,0} = (\alpha_{s,1}, \ldots, \alpha_{s,J_s})$ denote the corresponding vector of true representer functions such that $\tau_{s,j}(g)=\int \alpha_{s,j}(x)\,g(x)\,dQ(x)$. Stacking these across trials gives
\[
\alpha_0 = (\alpha_{1,0}, \ldots, \alpha_{m,0}).
\]
Using information from all trials, we form the stacked moment conditions for $g$:
\[
\int \alpha_0(x)\,g(x)\,dQ(x) - \tau_0 = 0_J,
\]
where the integral is understood componentwise.

However, with only a finite set of moment equations, the CATE function $g$ is not nonparametrically identified. We therefore impose a parametric model \(g(\cdot; \theta)\), indexed by a $d$-dimensional parameter $\theta \in \Theta \subset \mathbb{R}^d$, and assume that:

\textit{A4. Parametric CATE function}: There exists \(\theta_0 \in \Theta\) such that the true CATE function satisfies \(g_0(\cdot) = g(\cdot; \theta_0)\).

\textit{A5. Uniqueness of the CATE parameters}: For any \(\theta \neq \theta_0\), 
\[
g(\cdot;\theta) - g(\cdot;\theta_0) \notin \mathrm{span}\{\alpha_{1,0}, \ldots, \alpha_{J,0}\}^\perp,
\]
where \((\alpha_{1,0}, \ldots, \alpha_{J,0}) \in L^2(Q)^J\).

\textit{A6. Moment invariance under exponential tilting}: The true tilting parameters \(\eta_0\) satisfy
\[
\int \bigl\{\alpha_0(x) - \alpha(x;\eta_0)\bigr\}\,g(x;\theta)\,dQ(x)=0_J,
\]
for all $\theta \in \Theta$, where the integrals are understood componentwise.

\begin{Proposition}[Identification of CATE parameters]
Under assumptions \textit{A1} and \textit{A3} through \textit{A6}, the true parameter $\theta_0$ is the unique element of $\Theta$ satisfying
\[
\int \alpha(x;\eta_0)\,g(x;\theta)\,dQ(x) - \tau_0 = 0_J,
\]
where the integral is understood componentwise.
\end{Proposition}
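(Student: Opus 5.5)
The plan has two parts: show that $\theta_0$ satisfies the displayed equation (existence), then show that no other $\theta\in\Theta$ does (uniqueness). Both parts rest on one move. Assumption A6 lets us replace the working representers $\alpha(\cdot;\eta_0)$ with the true representers $\alpha_0$ whenever they are integrated against a member of the parametric family. So for every $\theta\in\Theta$,
\[
\int \alpha(x;\eta_0)\,g(x;\theta)\,dQ(x)=\int \alpha_0(x)\,g(x;\theta)\,dQ(x).
\]
This is where A3 enters: the representers $\alpha_{s,j}$ exist in $L^2(Q)$ by the Riesz representation argument given before the proposition, and each $g(\cdot;\theta)$ lies in $L^2(Q)$. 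Hence all the integrals are finite inner products in $L^2(Q)$.

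\emph{Existence.} By A1, each reported estimand of trial $s$ equals the corresponding functional of the common CATE function $g_0$, that is, $\tau_{s,j}=\tau_{s,j}(g_0)=\int\alpha_{s,j}g_0\,dQ$. By A4, $g_0=g(\cdot;\theta_0)$. Stacking over all trials gives $\int\alpha_0\,g(\cdot;\theta_0)\,dQ=\tau_0$. Applying the A6 substitution at $\theta=\theta_0$ then shows that $\theta_0$ solves the displayed equation.

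\emph{Uniqueness.} Suppose some $\theta\in\Theta$ also satisfies the equation. Subtract the equation at $\theta_0$ and apply the A6 substitution to both terms; by linearity of the integral this gives
\[
\int \alpha_0(x)\,\{g(x;\theta)-g(x;\theta_0)\}\,dQ(x)=0_J.
\]
In words, $g(\cdot;\theta)-g(\cdot;\theta_0)$ is $L^2(Q)$-orthogonal to every $\alpha_{j,0}$. It therefore lies in $\mathrm{span}\{\alpha_{1,0},\ldots,\alpha_{J,0}\}^\perp$, since orthogonality to the generators implies orthogonality to their span. If $\theta\neq\theta_0$, this contradicts A5, so $\theta=\theta_0$.

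The only delicate step is the bookkeeping that justifies these manipulations. We need $g(\cdot;\theta)\in L^2(Q)$ for all $\theta$, which is implicit in A6 being well defined. We also need the A6 identity to hold for the same $\eta_0$ at both $\theta$ and $\theta_0$, which is exactly what its "for all $\theta$" clause provides. The conditional-on-stratum components $\tau_{s,k,l}$ fit the same representer form because each reported stratum has positive probability. Once these points are in place, the argument is pure linearity plus the orthogonality characterization in A5. No moment-matching property of $\eta_0$ beyond A6 is needed, so A2 and the exponential-tilting equations play no role in identifying $\theta$.
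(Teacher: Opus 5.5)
Your proposal is correct and follows the paper's proof essentially step for step. Existence comes from A1, A4, the representer identity under A3, and A6 at $\theta_0$; uniqueness comes from subtracting the two moment equations, using A6 to replace $\alpha(\cdot;\eta_0)$ with $\alpha_0$, and invoking A5 through the orthogonal complement of $\mathrm{span}\{\alpha_{1,0},\ldots,\alpha_{J,0}\}$.
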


The base probability measure \(Q\) serves as a device for aggregating and transmitting information across sources to facilitate identification. Under assumptions \textit{A3} through \textit{A6}, $Q$ can in principle be chosen flexibly, provided that it dominates each trial-specific covariate distribution (i.e., $P_s(X) \ll Q$). For example, if the CATE model is linear in the covariates (with no nonlinear terms or interactions), then identification should be insensitive to the specific choice of $Q$ as long as the support is compatible and assumption \textit{A3} holds. As another example, suppose the CATE model includes nonlinear basis expansions but no interaction terms. If the available descriptive statistics are sufficiently rich and the marginal covariate distributions belong to an exponential family, then the choice of $Q$ should still have no effect on identification. As the CATE model becomes more expressive -- for instance, including both nonlinear basis functions and interactions -- identification generally requires either richer descriptive information (e.g., pairwise correlations or higher-order moments) or a choice of \(Q\) that supplies additional structure. In the latter case, a natural approach is to take \(Q\) as (or estimate \(Q\) from) the empirical covariate distribution in the target population, which is often convenient for satisfying assumption \textit{A3}.

\textit{C2. Regularity conditions for identification}: Let $\theta_0 \in \mathrm{int}(\Theta)$ denote the true parameter, and define the stacked moment function
\[
M(\theta) \equiv \int \alpha(x;\eta_0)\,g(x;\theta)\,dQ(x) - \tau_0 \in \mathbb{R}^J,
\]
where the integral is understood componentwise. We assume that $M:\Theta \to \mathbb{R}^J$ is continuously differentiable on an open neighborhood $\mathcal{N}(\theta_0)$, and that the Jacobian
\[
J^m_\theta(\theta_0) \equiv \partial_\theta M(\theta)\big|_{\theta=\theta_0}
= \int \alpha(x;\eta_0)\,\partial_\theta g(x;\theta_0)\,dQ(x)
= \mathbb{E}_Q\!\left[\alpha(X;\eta_0)\,\partial_\theta g(X;\theta_0)\right]
\in \mathbb{R}^{J\times d}
\]
has full column rank, i.e., $\mathrm{rank}\!\bigl(J^m_\theta(\theta_0)\bigr)=d$.

These conditions imply \(J \ge d\), so the total number of treatment-effect moments across all trials must be at least as large as the number of parameters. This places a practical constraint on the complexity of the chosen parameterization given the available aggregate data. Moreover, full column rank typically requires that the moment functions provide non-redundant information about \(\theta\); heuristically, this is facilitated when the components of \(\alpha = (\alpha_1, \ldots, \alpha_J)\) are sufficiently linearly independent under \(Q\). In practice, this requirement is often plausible because trial populations (and hence the implied weighting functions) differ across studies.

\subsection{Estimation}
\label{estimation}

Let $O_{1,s}, \ldots, O_{n_s,s} \sim P_s$ be independent and identically distributed draws for $s = 0, \ldots, m$, with total sample size $n = \sum_{s=0}^m n_s$. Let $X_{1}, \ldots, X_{n_q} \sim Q$ denote independent and identically distributed draws from the base distribution $Q$. We use the index sets $\mathcal{I}_s$ and $\mathcal{I}_q$ to refer to the observations from the corresponding sources. Throughout, we assume that the $Q$-sample is independent of the samples from $P_0,\ldots,P_m$; if not, sample splitting can be used to enforce independence. When $Q$ is specified parametrically, the draws $X_{1}, \ldots, X_{n_q} \sim Q$ may be interpreted as Monte Carlo draws from the chosen base distribution.

Let $\widehat\tau=(\widehat\tau_1,\ldots,\widehat\tau_m)$ denote the stacked vector of estimated treatment effects from all trials, where each $\widehat\tau_s=\bigl(\widehat\tau_{s,0,0},\{\widehat\tau_{s,k,l}\}_{k=1,\ldots,K_s;\,l=1,\ldots,L_{s,k}}\bigr)\in\mathbb{R}^{J_s}$. We define the sample moment function as
\[
\widehat M(\theta,\widehat\tau,\widehat\eta)=n_q^{-1}\sum_{i\in\mathcal{I}_q}\alpha(X_i;\widehat\eta)\,g(X_i;\theta)-\widehat\tau,
\]
where $\widehat\eta=(\widehat\eta_1,\ldots,\widehat\eta_m)$ and, for each $s=1,\ldots,m$, the estimator $\widehat\eta_s$ is obtained by solving the sample moment equations
\[
n_q^{-1}\sum_{i\in\mathcal{I}_q} w_s(X_i;\widehat\eta_s)\,h_s^+(X_i)-\widehat\mu_s^+=0_{R_s+1}.
\]

The CATE parameter $\theta$ is estimated by generalized method of moments:
\[
\widehat \theta = \arg \min_{\theta \in \Theta} \widehat M(\theta,\widehat \tau,\widehat \eta)^T W \widehat M(\theta,\widehat \tau,\widehat \eta),
\]
where $W$ is a positive-definite weighting matrix of dimension $J\times J$. In practice, a two-step GMM procedure may be used, with a first-step positive-definite weighting matrix and a second-step weighting matrix based on a consistent estimate of the inverse asymptotic covariance of the stacked moments \citep{hansen1982large}.

Given $\widehat\theta$, we estimate the target estimand by the plug-in estimator
\[
\widehat \psi = n_0^{-1} \sum_{i\in \mathcal{I}_0} g(X_i;\widehat\theta),
\]
that is, by averaging the estimated CATE over the target-population sample.

\begin{Proposition}[Asymptotic linearity of the estimators]
Under \textit{C1}, \textit{C2}, and mild regularity conditions, the estimators $\widehat \psi$ and $\widehat \theta$ admit the following asymptotic linear representations. First, 
\[
\widehat \psi - \psi_0 = n_0^{-1} \sum_{i\in \mathcal{I}_0} \phi_0^\psi(X_i; \theta_0) + J^\psi_\theta (\widehat \theta - \theta_0) + o_p(n_0^{-1/2} + \|\widehat \theta - \theta_0\|),
\]
where $\phi_0^\psi(X; \theta_0) = g(X; \theta_0) - \psi_0$ and $J^\psi_\theta = \mathbb{E}_{P_0}[\partial_\theta g(X; \theta_0)] \in \mathbb{R}^{1 \times d}$. Second, 
\begin{align*}
    \widehat \theta - \theta_0 & = n_q^{-1} \sum_{i\in \mathcal{I}_q} J^\theta_m m(X_i; \theta_0, \tau_0, \eta_0)
    + \sum_{s=1}^m \left\{ n_s^{-1} \sum_{i\in \mathcal{I}_s} J^\theta_m J^m_{\tau_s} \phi_s^\tau(O_i) \right\} \\
    & \quad
    + \sum_{s=1}^m J^\theta_m J^m_{\eta_s} J^{\eta_s}_{\mu_s} \left\{ n_q^{-1} \sum_{i\in \mathcal{I}_q} (w_s(X_i; \eta_{s,0})h_s^+(X_i) - \mu_{s,0}^+) - n_s^{-1} \sum_{i\in \mathcal{I}_s} (h_s^+(X_i) - \mu_{s,0}^+) \right\} \\
    & \quad + o_p(n_q^{-1/2} + \sum_{s=1}^m n_s^{-1/2}),
\end{align*}
where $m(X; \theta_0, \tau_0, \eta_0) = \alpha(X; \eta_0) g(X; \theta_0) - \tau_0$, $J^\theta_m = -((J^m_\theta)^T W J^m_\theta)^{-1} (J^m_\theta)^T W \in \mathbb{R}^{d \times J}$, $J^m_\theta = \mathbb{E}_Q[\alpha(X; \eta_0) \partial_\theta g(X; \theta_0)] \in \mathbb{R}^{J \times d}$, $J^m_{\tau_s} = \mathrm{blockdiag}(-1) \in \mathbb{R}^{J \times J_s}$, $J^m_{\eta_s} = \mathbb{E}_Q[g(X; \theta_0) \partial_\eta \alpha_s(X; \eta_{s,0})] \in \mathbb{R}^{J \times (R_s + 1)}$, and $J^{\eta_s}_{\mu_s} = -\mathbb{E}_Q[h_s^+(X) \partial_\eta w_s(X; \eta_{s,0})]^{-1} \in \mathbb{R}^{(R_s + 1) \times (R_s + 1)}$.
\end{Proposition}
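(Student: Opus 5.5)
The plan is a standard three-stage Z-estimator / GMM expansion: first the tilting parameters $\widehat\eta_s$, then the GMM estimator $\widehat\theta$, then the plug-in $\widehat\psi$. Under the mild regularity conditions we take the following as given. $\Theta$ is compact, or else consistency of $\widehat\theta$ is granted. The maps $\eta\mapsto w_s(x;\eta)h_s^+(x)$ and $(\theta,\eta)\mapsto\alpha(x;\eta)g(x;\theta)$ are continuously differentiable, with envelopes integrable under $Q$ in a neighbourhood of $(\theta_0,\eta_0)$, so that uniform laws of large numbers hold. The matrix $\mathbb{E}_Q[h_s^+(X)\partial_\eta w_s(X;\eta_{s,0})]$ is invertible. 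We also assume $\mathbb{E}_Q[w_s(X;\eta_0)h_s^+(X)]=\mu^+_{s,0}$ and $\mathbb{E}_Q[\alpha(X;\eta_0)g(X;\theta_0)]=\tau_0$, which by the identification proposition amounts to A6 together with A4. Consistency $\widehat\eta_s\to\eta_{s,0}$ and $\widehat\theta\to\theta_0$ follows from the usual argmin/Z-estimator consistency theorems. For $\widehat\theta$ this uses the uniqueness delivered by the identification proposition (A5) and the fact that $\widehat M\to M$ uniformly in probability.

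\emph{Step 1 (tilting parameters).} Write $\widehat\Psi_s(\eta)=n_q^{-1}\sum_{\mathcal I_q}w_s(X_i;\eta)h_s^+(X_i)$. A mean-value expansion of $\widehat\Psi_s(\widehat\eta_s)-\widehat\mu_s^+=0$ around $\eta_{s,0}$, with $\partial_\eta\widehat\Psi_s$ evaluated at intermediate points converging to $\mathbb{E}_Q[h_s^+\partial_\eta w_s]$, gives
\[
\widehat\eta_s-\eta_{s,0}=J^{\eta_s}_{\mu_s}\Bigl\{\widehat\Psi_s(\eta_{s,0})-\mu^+_{s,0}-(\widehat\mu^+_s-\mu^+_{s,0})\Bigr\}+o_p(n_q^{-1/2}+n_s^{-1/2}).
\]
Substituting the exact linear representation of $\widehat\mu_s$ from C1 produces exactly the bracketed term in the displayed expansion; the leading $1$ in $\mu^+$ contributes zero.

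\emph{Step 2 (GMM expansion).} The first-order condition is $\partial_\theta\widehat M(\widehat\theta)^TW\widehat M(\widehat\theta)=0$. Expand $\widehat M(\widehat\theta,\widehat\tau,\widehat\eta)$ around $(\theta_0,\tau_0,\eta_0)$:
\[
\widehat M(\theta_0,\tau_0,\eta_0)+\widehat J^m_\theta(\widehat\theta-\theta_0)+J^m_\tau(\widehat\tau-\tau_0)+\sum_s\widehat J^m_{\eta_s}(\widehat\eta_s-\eta_{s,0}).
\]
Here $\widehat M(\theta_0,\tau_0,\eta_0)=n_q^{-1}\sum_{\mathcal I_q}m(X_i;\theta_0,\tau_0,\eta_0)$, which is $O_p(n_q^{-1/2})$ and centred under the assumptions above, and $J^m_\tau=-I$ is split into the blocks $J^m_{\tau_s}$. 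The sample Jacobians converge to $J^m_\theta$ and $J^m_{\eta_s}$ by the ULLN and consistency. Substituting into the first-order condition, and using full column rank of $J^m_\theta$ (C2) to invert $(J^m_\theta)^TWJ^m_\theta$, gives
\[
\widehat\theta-\theta_0=J^\theta_m\Bigl\{\widehat M(\theta_0,\tau_0,\eta_0)+\sum_sJ^m_{\tau_s}(\widehat\tau_s-\tau_{s,0})+\sum_sJ^m_{\eta_s}(\widehat\eta_s-\eta_{s,0})\Bigr\}+o_p(\cdot).
\]
Plugging in C1 for $\widehat\tau_s$ and Step 1 for $\widehat\eta_s$ yields the claimed representation.

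\emph{Step 3 (target estimand).} Decompose
\[
\widehat\psi-\psi_0=n_0^{-1}\sum_{\mathcal I_0}\{g(X_i;\theta_0)-\psi_0\}+n_0^{-1}\sum_{\mathcal I_0}\{g(X_i;\widehat\theta)-g(X_i;\theta_0)\}.
\]
A Taylor expansion of the second term, together with a ULLN for $\partial_\theta g$ under $P_0$ and consistency of $\widehat\theta$, replaces it by $J^\psi_\theta(\widehat\theta-\theta_0)+o_p(\|\widehat\theta-\theta_0\|)$. Independence of the $Q$-sample from the other samples is not needed for linearity, only for later variance computations.

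The main obstacle is the bookkeeping of remainders in Step 2. The difficulty is that $\widehat M$ depends jointly on the estimated $\widehat\eta$ and on the $Q$-sample average, so the cross term $(\widehat J^m_{\eta_s}-J^m_{\eta_s})(\widehat\eta_s-\eta_{s,0})$ and the $\theta$-Jacobian evaluated at intermediate points must be shown to be $o_p$ of the stated rates. The approach is to establish $\sqrt{n}$-consistency of $\widehat\eta_s$ first, from Step 1. Then a stochastic-equicontinuity or ULLN argument under the envelope conditions shows that every product of an $o_p(1)$ matrix with an $O_p(n_q^{-1/2}+\sum_s n_s^{-1/2})$ vector is absorbed into the remainder.
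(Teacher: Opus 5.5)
Your proposal is correct and follows essentially the same route as the paper: an integral-form Taylor expansion of $g(X_i;\widehat\theta)$ with a ULLN for $\partial_\theta g$ for $\widehat\psi$, a joint expansion of $\widehat M(\widehat\theta,\widehat\tau,\widehat\eta)$ substituted into the GMM first-order condition, and a separate expansion of the tilting equations using the exact linear form of $\widehat\mu_s$ from \textit{C1}; the only difference is the order of presentation. The one remainder you should make explicit is the $o_p(\|\widehat\theta-\theta_0\|)$ term from the $\theta$-Jacobian. The paper handles it by the self-bounding step $\widehat\theta-\theta_0=O_p(r_n)+o_p(r_n+\|\widehat\theta-\theta_0\|)$, which gives $\widehat\theta-\theta_0=O_p(r_n)$ with $r_n=n_q^{-1/2}+\sum_{s=1}^m n_s^{-1/2}$; alternatively, invert the intermediate-point Jacobian exactly before replacing it by $J^m_\theta$.
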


The asymptotic linear representations make explicit the first-order propagation of uncertainty from the multiple data sources. For the CATE estimator $\widehat\theta$, the first-order uncertainty arises from three components: (1) estimation of the stacked moments using the base distribution $Q$; (2) estimation of the treatment effects within each trial; and (3) estimation of the exponential-tilting parameters using information from both the trials and the base distribution. When $Q$ is specified parametrically and evaluated by Monte Carlo sampling, the contribution of the $Q$-sample is controlled by the Monte Carlo sample size $n_q$. In that case, the dominant sources of uncertainty are typically the trial-based estimates.

The estimator is mainly intended for meta-analyses that combine multiple randomized trials. Nonetheless, it also applies when only a single trial, or a small number of trials, is available. Its validity does not rely on the number of trials per se; rather, it relies on large-sample behavior within trials, which is the same requirement underlying the original trial-level estimators. In other words, our approach applies directly to single-trial generalizability and transportability analyses when only summary data are available from the trial \cite{dahabreh2019extending, dahabreh2020extending}.

When combining multiple trials with different lengths of follow-up, the estimator can be modified to incorporate follow-up time as part of the parameterization of the CATE function. The resulting specification then allows estimation of the corresponding ATEs at the observed follow-up time points in the target population (see Appendix \ref{appendix5}).

\begin{Corollary}[Asymptotic normality]
Under the conditions of Proposition 2, as $n \to \infty$,
\begin{align*}
    \sqrt{n} (\widehat \psi - \psi_0) &\xrightarrow{d} N(0, V_\psi); \\
    \sqrt{n} (\widehat \theta - \theta_0) &\xrightarrow{d} N(0, V_\theta),
\end{align*}
provided that $n / n_s \rightarrow \pi_s > 0$ for $s = 0, \ldots, m$ and $n / n_q \rightarrow \pi_q > 0$, where $V_\psi = \pi_0 \mathbb{E}_{P_0}[\phi_0^\psi(X; \theta_0)^2] + J^\psi_\theta V_\theta (J^\psi_\theta)^T$.
\end{Corollary}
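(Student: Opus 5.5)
The plan is to take the asymptotic linear representations of Proposition 2 and apply a multivariate central limit theorem to the sum of independent sample averages, followed by Slutsky's lemma.

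\textbf{Step 1: rescale the expansion for $\widehat\theta$.} Write $\sqrt{n}(\widehat\theta-\theta_0)$ as a finite sum of terms of the form $\sqrt{n/n_j}\cdot n_j^{-1/2}\sum_{i\in\mathcal{I}_j}\xi_j(O_i)$, one for each source $j\in\{q,1,\ldots,m\}$. The influence functions are:
\begin{itemize}
\item for the $Q$-sample, $\xi_q = J^\theta_m m(X;\theta_0,\tau_0,\eta_0)+\sum_s J^\theta_m J^m_{\eta_s}J^{\eta_s}_{\mu_s}\{w_s(X;\eta_{s,0})h_s^+(X)-\mu_{s,0}^+\}$;
\item for trial $s$, $\xi_s = J^\theta_m J^m_{\tau_s}\phi_s^\tau(O)-J^\theta_m J^m_{\eta_s}J^{\eta_s}_{\mu_s}\{h_s^+(X)-\mu_{s,0}^+\}$.
\end{itemize}
Each $\xi_j$ has mean zero:
\begin{itemize}
\item $\mathbb{E}_Q[m(X;\theta_0,\tau_0,\eta_0)]=0$ by Proposition 1 together with A6;
\item the tilting moment equation gives $\mathbb{E}_Q[w_s h_s^+]=\mu_{s,0}^+$;
\item $\phi_s^\tau$ is centered by C1.
\end{itemize}
Each $\xi_j$ also has finite second moments, which we take to be part of the ``mild regularity conditions.''

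The remainder is $o_p(n_q^{-1/2}+\sum_s n_s^{-1/2})$. Because $n/n_j\to\pi_j<\infty$, multiplying by $\sqrt n$ makes it $o_p(1)$.

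\textbf{Step 2: joint CLT.} The samples from $Q,P_1,\ldots,P_m$ are mutually independent, as assumed in Section \ref{estimation}. By the Lindeberg--L\'evy CLT each normalized sum converges to $N(0,\mathbb{E}_{P_j}[\xi_j\xi_j^\top])$, and by independence the convergence is joint. The continuous mapping theorem and Slutsky then give
\[
\sqrt n(\widehat\theta-\theta_0)\xrightarrow{d}N(0,V_\theta),\qquad V_\theta=\sum_{j\in\{q,1,\ldots,m\}}\pi_j\,\mathbb{E}_{P_j}[\xi_j\xi_j^\top].
\]

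\textbf{Step 3: the target estimand.} Use the first expansion of Proposition 2. Since $\|\widehat\theta-\theta_0\|=O_p(n^{-1/2})$ by Step 2, the remainder is $o_p(n^{-1/2})$. Hence
\[
\sqrt n(\widehat\psi-\psi_0)=\sqrt{n/n_0}\,n_0^{-1/2}\textstyle\sum_{\mathcal{I}_0}\phi_0^\psi+J^\psi_\theta\sqrt n(\widehat\theta-\theta_0)+o_p(1).
\]
The target-sample term depends only on $\mathcal{I}_0$. The $\widehat\theta$ term depends only on the $Q$- and trial samples, because $\widehat\theta$ does not use the target data. We need the $Q$-sample to be independent of $P_0$; this holds by assumption, or by sample splitting if $Q$ is estimated from target data. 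The two pieces are therefore asymptotically independent, so their limits are jointly normal with zero cross-covariance. The variance becomes
\[
V_\psi=\pi_0\,\mathbb{E}_{P_0}[(\phi_0^\psi)^2]+J^\psi_\theta V_\theta (J^\psi_\theta)^\top,
\]
as stated.

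\textbf{Main obstacle.} The main difficulty is bookkeeping rather than new analysis. The $Q$-sample enters $\widehat\theta$ in two places: directly through the stacked moments, and through $\widehat\eta$. Both contributions must be merged into a single influence function $\xi_q$ before applying the CLT; otherwise the covariance between them would be missed. The same merging is needed within each trial between $\widehat\tau_s$ and $\widehat\mu_s$, which are computed on the same participants. I would handle both by defining $\xi_j$ per sample exactly as in Step 1, rather than per estimator.
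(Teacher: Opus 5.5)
Your proposal is correct and follows the same route as the paper, whose proof simply invokes the central limit theorem on the asymptotic linear representation of Proposition 2. Your per-sample influence functions $\xi_q$ and $\xi_s$, the resulting $V_\theta=\sum_j \pi_j\,\mathbb{E}_{P_j}[\xi_j\xi_j^\top]$ (which equals the paper's $J^\theta_m\Omega(J^\theta_m)^T$), and the independence between the target sample and the samples entering $\widehat\theta$ needed for the stated $V_\psi$ all match what the paper spells out in its variance-estimation appendix.
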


By Corollary 1, the sampling variances of $\widehat \psi$ and $\widehat \theta$ are $\mathrm{Var}(\widehat\psi)=n^{-1}V_\psi$ and $\mathrm{Var}(\widehat\theta)=n^{-1}V_\theta$. Under condition \textit{C1s}, we can estimate these variances by the plug-in estimators $\widehat{\mathrm{Var}}(\widehat\psi)=n^{-1}\widehat V_\psi$ and $\widehat{\mathrm{Var}}(\widehat\theta)=n^{-1}\widehat V_\theta$. Standard Wald-type $95\%$ confidence intervals can then be constructed as $\widehat\psi \pm 1.96\sqrt{\widehat{\mathrm{Var}}(\widehat\psi)}$ and $\widehat\theta \pm 1.96\sqrt{\widehat{\mathrm{Var}}(\widehat\theta)}$. Explicit expressions for $\widehat V_\psi$ and $\widehat V_\theta$ follow from Proposition 2 and are given in Appendix \ref{appendix6}.

\subsection{Indirect treatment comparisons in a target population}

Indirect treatment comparison, also referred to as indirect head-to-head comparison, is an important approach in health technology assessment that allows comparison of two interventions in the absence of direct head-to-head trials. Conventionally, such indirect comparisons are constructed to reflect the trial population of one of the interventions. Here, we propose an indirect comparison in the target population and demonstrate that our proposed method immediately applies to this situation.

Let $A \in \{0,1,2\}$ index three interventions, where $a=0$ denotes the common reference or placebo. Suppose that one or more trials compare either $a=1$ versus $a=0$ or $a=2$ versus $a=0$, while the target comparison of interest is the head-to-head contrast $a=1$ versus $a=2$ in a target population indexed by $S=0$. We define the indirect treatment comparison estimand as
\[
\psi^{12} \equiv \mathbb{E}[Y(2)-Y(1)\mid S=0].
\]

Let $g_s(x, a) \equiv \mathbb{E}[Y(a) - Y(0)| X = x, S = s]$ be the CATE functions for both $a = 1, 2$ and $s = 0, \ldots, m$. Let the index set $\mathcal{S}_{10}$ be the trials with comparison $a = 1$ vs $a = 0$ and the index set $\mathcal{S}_{20}$ be the trials with comparison $a = 2$ vs $a = 0$. We consider the slightly modified assumptions:

\textit{A1e. Transportability of CATE functions across the trials and the target population}: For each $a \in \{1,2\}$,
\[
g_s(x,a)=g(x,a)\equiv \mathbb{E}[Y(a)-Y(0)\mid X=x]
\]
for every $s \in \{0\}\cup \mathcal{S}_{a0}$ and every $x \in \{x: p(x,S=0)>0\}$.

\textit{A2e. Population overlap}: For each $a \in \{1,2\}$,
\[
\{x: p(x,S \in \mathcal{S}_{a0})>0\}\supseteq \{x: p(x,S=0)>0\}.
\]

Under assumptions \textit{A1e} and \textit{A2e}, we express the estimand as:
\[
\psi^{12} = \mathbb{E}[g(X, 2) - g(X, 1)|S = 0].
\]

We identify and estimate $g(x,2)$ using trials that compare $a=2$ versus $a=0$, and $g(x,1)$ using trials that compare $a=1$ versus $a=0$. Depending on data availability, either individual or aggregate data can be used for this step. Here, we focus on estimation of parametric CATE functions using aggregate data.

Let $g(x,a;\widehat\theta^a)$ denote the estimated parametric CATE function for treatment level $a\in\{1,2\}$ obtained from the corresponding set of trials. We then estimate the indirect comparison estimand using the plug-in estimator
\[
\widehat \psi^{12} = n_0^{-1} \sum_{i\in \mathcal{I}_0} \bigl[g(X_i,2;\widehat \theta^2) - g(X_i,1;\widehat \theta^1)\bigr].
\]

By an argument analogous to Proposition 2, the indirect comparison estimator admits the asymptotic linear representation
\begin{align*}
    \widehat \psi^{12} - \psi_0^{12} &= n_0^{-1} \sum_{i\in \mathcal{I}_0} \phi_0^\psi(X_i; \theta_0^1, \theta_0^2) + J^\psi_{\theta^1} (\widehat \theta^1 - \theta_0^1) + J^\psi_{\theta^2} (\widehat \theta^2 - \theta_0^2) \\
    & \quad + o_p\!\left(n_0^{-1/2} + \|\widehat \theta^1 - \theta_0^1\| + \|\widehat \theta^2 - \theta_0^2\|\right),
\end{align*}
where $\phi_0^\psi(X; \theta_0^1, \theta_0^2) = g(X,2; \theta_0^2) - g(X,1; \theta_0^1) - \psi_0^{12}$, $J^\psi_{\theta^1} = - \mathbb{E}_{P_0}[\partial_\theta g(X,1; \theta_0^1)] \in \mathbb{R}^{1 \times d_1}$, and $J^\psi_{\theta^2} = \mathbb{E}_{P_0}[\partial_\theta g(X,2; \theta_0^2)] \in \mathbb{R}^{1 \times d_2}$.

Consequently,
\[
\sqrt{n}(\widehat \psi^{12} - \psi_0^{12}) \xrightarrow{d} N(0, V_\psi^{12}).
\]
A plug-in estimator of the sampling variance can be constructed analogously to Appendix \ref{appendix6}, using the asymptotic linear representation above.

\section{SIMULATION STUDY}

We conducted a simulation study to examine the finite-sample behavior of the proposed method, CIMAgD, under a stylized set of data-generating mechanisms motivated by \citet{dahabreh2023efficient}. We designed the simulation to illustrate whether the method can recover the target-population average treatment effect in the presence of effect modification and differences in covariate distributions between the trials and the target population. We report the simulation study using the aims, data-generating mechanisms, estimands, methods, and performance measures (ADEMP) framework \citep{morris2019using}.

\subsection{Methods}

\subsubsection{Aims}

We aimed to assess whether CIMAgD can recover the average treatment effect in a target population using aggregate data from randomized trials when treatment effects vary with baseline covariates. We considered both a multi-trial setting, to mimic a meta-analysis, and a single-trial setting, to show that the method does not require multiple trials and can be used for single-trial generalizability or transportability analyses. In the multi-trial setting, we compared CIMAgD with standard meta-analysis, meta-regression, and an individual participant data (IPD) g-formula benchmark. In the single-trial setting, we compared CIMAgD with the IPD g-formula.

\subsubsection{Data-generating mechanisms}

We considered settings with a binary treatment and a binary outcome. We first generated individual-level covariate data from an overall population that included both the trial populations and the target population. We then selected participants into the overall trial population based on the covariates, and we treated the remaining participants as the target-population sample. We allocated individuals from the overall trial population to each trial. Within each trial, we randomly assigned treatment with a fixed probability of 0.5. We generated the potential outcomes for each participant based on the covariates and defined the observed outcome under the treatment actually assigned. We then summarized the individual-level data into aggregate-level data in terms of the treatment effects and covariates.

\begin{enumerate}
    \item Generate the covariates for the overall population from a joint distribution: $X = (X_1, X_2, X_3) \sim p(x; \eta)$.
    \item Select participants into the overall trial population using a logistic model: $\Pr(S \in \mathcal{S} \mid X) = \frac{\exp(\beta^T X)}{1 + \exp(\beta^T X)}$.
    \item Allocate individuals in the trial population to each trial using a multinomial logistic model: $\Pr(S = s \mid S \in \mathcal{S}, X) = \frac{\exp(\gamma_s^T X)}{\sum_{j = 1}^m \exp(\gamma_j^T X)}$.
    \item Randomly assign treatment to participants in each trial: $\Pr(A = 1 \mid S = s) = 0.5$.
    \item Generate the potential outcomes using logistic models and define the observed outcome by causal consistency: $\mathrm{logit}\{\mathbb{E}[Y(a)\mid X]\} = \theta(a)^T X$ for both $a = 0$ and $a = 1$, and $Y = A Y(1) + (1 - A) Y(0)$.
    \item Summarize the individual-level data into aggregate-level data: for the outcomes, $\widehat \tau_{s,0,0} = \widehat{\mathbb{E}}[Y \mid A = 1, S = s] - \widehat{\mathbb{E}}[Y \mid A = 0, S = s]$ and $\widehat \tau_{s,k,l} = \widehat{\mathbb{E}}[Y \mid A = 1, X_k = x_{k,l}, S = s] - \widehat{\mathbb{E}}[Y \mid A = 0, X_k = x_{k,l}, S = s]$ for $s = 1,\ldots,m$, $k = 1,\ldots,K_s$, and $l = 1,\ldots,L_{s,k}$; for binary covariates, $\widehat \mu_{s,k} = \widehat{\mathbb{E}}[X_k \mid S=s]$, and for continuous covariates, $\widehat \mu_{s,k} = (\widehat{\mathbb{E}}[X_k \mid S=s], \widehat{\mathrm{SD}}(X_k \mid S=s))$, for $s = 1,\ldots,m$ and $k = 1,\ldots,K$.
\end{enumerate}

The data-generating process involved the following groups of parameters: (1) the parameters in the joint covariate distribution $\eta$, (2) the coefficients for selection into the overall trial population $\beta$, (3) the coefficients for allocation to each trial $\gamma$, and (4) the coefficients for the potential-outcome models $\theta(a)$ for both $a = 0$ and $a = 1$. Appendix \ref{appendix7} gives the specific parameter values. For the joint distribution of covariates, we considered a mixed distribution with two binary covariates and one continuous covariate. For the binary outcome, we used logistic models for both potential outcomes. Consequently, the true CATE function was $g(X; \theta(1), \theta(0)) = \mathrm{expit}\{\theta(1)^T X\} - \mathrm{expit}\{\theta(0)^T X\}$. For CIMAgD, we used a canonical linear working model for the CATE function, $g(X; \theta) = \theta^T X$. Because the true CATE function was nonlinear under the logistic outcome models, this working model was misspecified for the true CATE and should be interpreted as a linear approximation. Thus, in these simulations, CIMAgD did not target the true nonlinear CATE itself; instead, it used the available aggregate information to estimate a linear approximation to the true CATE and then used that approximation to estimate the target-population average treatment effect. In contrast, we used a correctly specified model for the IPD g-formula benchmark, $g(X; \theta) = \mathrm{expit}\{\theta_1^T X + \theta_2 + \theta_3^T X\} - \mathrm{expit}\{\theta_1^T X\}$, where $\theta_2$ represents the baseline treatment effect and $\theta_3$ represents heterogeneous treatment effects. For meta-regression, we regressed the estimated trial-level treatment effects on study-level mean covariates. In the multiple-trial setting, we considered 5 trials and a total sample size of 5000 for the overall population, including both the trial and target populations. The sample sizes of the individual trials varied from 100 to 800, as determined by the multinomial logistic allocation model and the coefficients $\gamma$. In the single-trial setting, we considered total sample sizes of 1000 and 2000 for the overall population. We repeated the simulation 1000 times.

\subsubsection{Estimands}

We considered the average treatment effect in the target population as the target estimand. The estimand is defined as $\psi = \mathbb{E}[Y(1)-Y(0)|S=0]$.

\subsubsection{Methods}

We used four methods in the meta-analysis setting with 5 trials and two methods in the transportability analysis of a single trial. For the 5-trial setting, we compared: (1) the proposed method for estimating $\psi$, as implemented in the \texttt{R} package \texttt{CIMAgD}; (2) a standard meta-analysis method (\texttt{R} package \texttt{meta}); (3) a standard meta-regression method (\texttt{R} package \texttt{metafor}); and (4) the IPD g-formula (\texttt{R} function \texttt{glm}). For the single-trial setting, we considered only the proposed method and the IPD g-formula, because meta-analysis and meta-regression are not applicable.

For the proposed method, we used the target population as the base distribution to match the summary statistics of each trial. The summary statistics included only the mean or proportion and, for continuous covariates, the standard deviation, to mimic the information typically available in practice. We assumed that only point estimates of treatment effects and the corresponding standard errors were available for the proposed method, whereas covariance or correlation information was unavailable. For the IPD g-formula, we fit the model using the default settings of the \texttt{R} function \texttt{glm}. We estimated the standard error of the IPD g-formula using the influence function $\phi^\psi(O) = m(1, X, \theta_0) - m(0, X, \theta_0) - \psi + J^\psi_\theta \phi^\theta(O)$, where $\widehat{\mathbb{E}}[\phi^\theta (\phi^\theta)^T]$ was given by the sandwich variance estimator (\texttt{R} package \texttt{sandwich}). For the standard meta-analysis and meta-regression, we used random-effects models with restricted maximum likelihood estimation.

\subsubsection{Performance metrics}

We evaluated statistical performance using five metrics: (1) mean squared error $\mathbb{E}[(\widehat{\psi} - \psi)^2]$; (2) bias $\mathbb{E}(\widehat{\psi} - \psi)$; (3) variance $\mathrm{Var}(\widehat{\psi})$; (4) coverage probability of the $95\%$ confidence interval $\Pr\!\left(\psi \in [\widehat{\psi} - z_{0.975}\,\mathrm{se}(\widehat{\psi}), \widehat{\psi} + z_{0.975}\,\mathrm{se}(\widehat{\psi})]\right)$, where $\mathrm{se}(\widehat{\psi})$ denotes the estimated standard error of $\widehat{\psi}$ and $z_{0.975}$ denotes the 97.5th percentile of the standard normal distribution; and (5) mean absolute error $\mathbb{E}[|\widehat{\psi} - \psi|]$.

\subsection{Results}

Figures \ref{figure1} and \ref{figure2} show the simulation results for the 5-trial and single-trial settings, with 16 scenarios in each setting. Appendix \ref{appendix8} provides the detailed numerical results. In the 5-trial setting, both CIMAgD and the IPD g-formula were approximately unbiased and had small variance, leading to small MSE across all scenarios. The coverage of CIMAgD ranged from approximately 94\% to 97\%, which indicates good finite-sample performance. In contrast, standard meta-regression showed substantial bias in most scenarios and had much larger variance than the other methods. This pattern likely reflects the limited between-trial variation in the trial-level mean covariates generated in our setup. Such limited variation is common in practice, because trials with similar designs often apply similar eligibility criteria. By incorporating subgroup treatment effects in addition to marginal trial-level summaries, CIMAgD remained stable even when the trial-level mean covariates were similar across trials. Standard meta-analysis also showed substantial bias relative to the target-population estimand, because it combines the trial-specific marginal effects into a precision-weighted summary that does not, in general, equal the average treatment effect in the target population.

In the single-trial setting, CIMAgD and the IPD g-formula performed similarly, again showing small bias and small variance across scenarios. The coverage of CIMAgD was again around 94\% to 97\%.

\begin{figure}
    \centering
    \includegraphics[width=\textwidth]{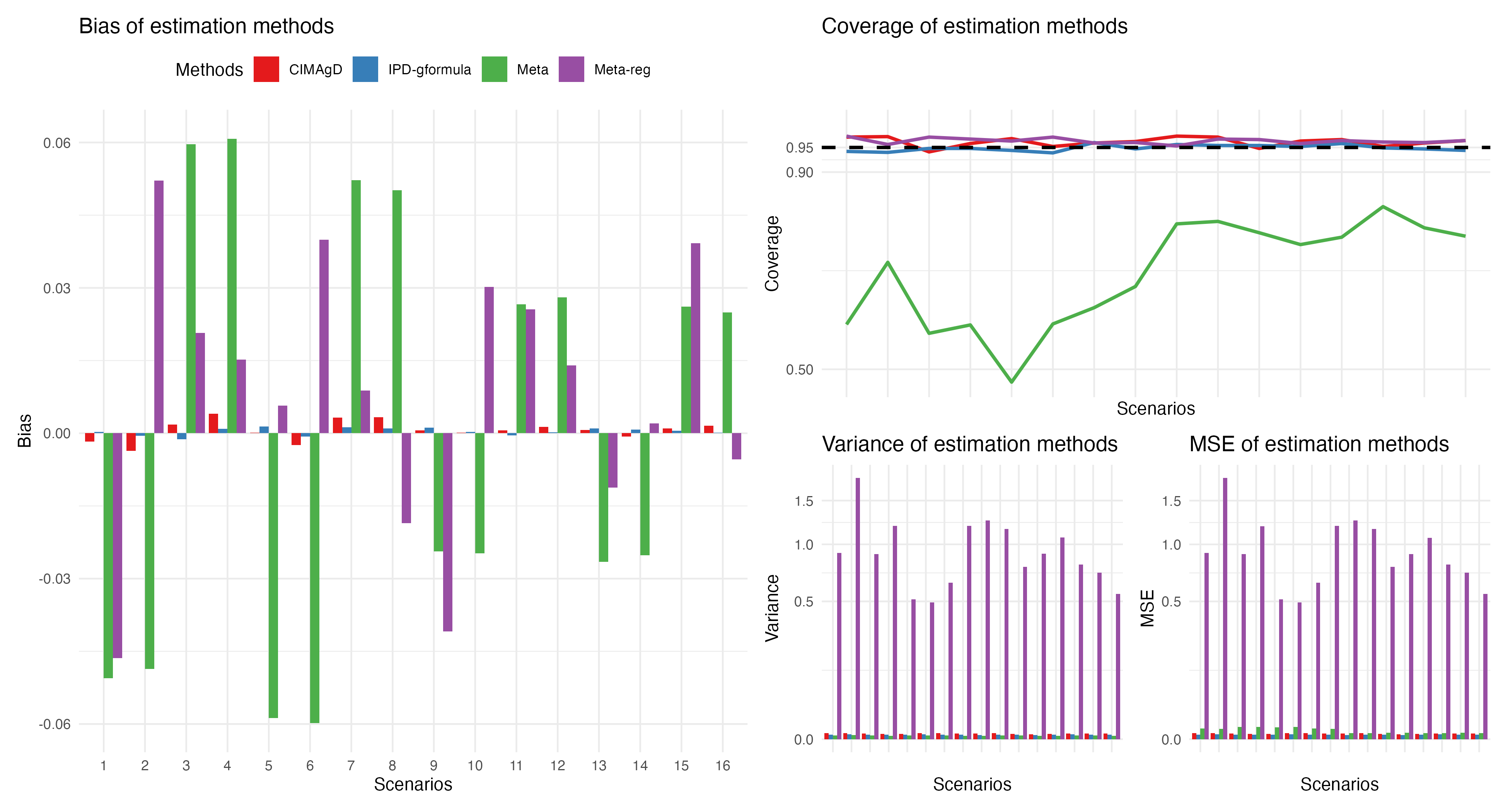}
    \caption{Simulation results in 5-trial setting}
    \label{figure1}
\end{figure}

\begin{figure}
    \centering
    \includegraphics[width=\textwidth]{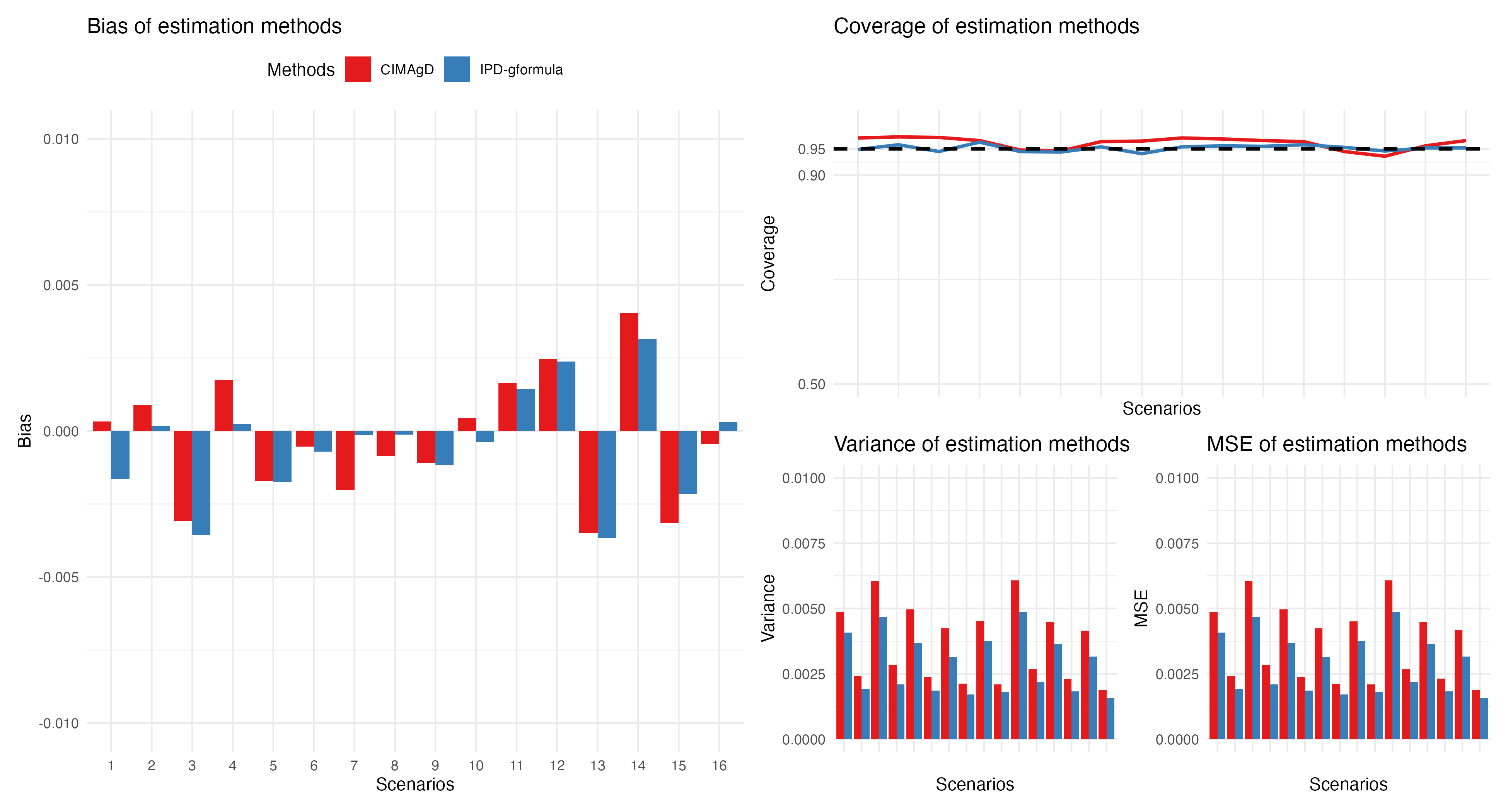}
    \caption{Simulation results in single-trial setting}
    \label{figure2}
\end{figure}

\section{APPLICATION}

\subsection{Background}

Patients with heart failure face increased risks of disability, morbidity, and mortality. Several large randomized trials have shown that SGLT2 inhibitors reduce the risk of heart failure hospitalization and cardiovascular death. However, heterogeneity in patient characteristics, including left ventricular ejection fraction (LVEF), history of hospitalization for heart failure (HHF), and diabetes status, may lead to heterogeneous treatment effects. A recent meta-analysis \citep{vaduganathan2022sglt2} examined the treatment effects of SGLT2 inhibitors across several subgroups, one at a time. That study found that the treatment effects were generally consistent across 14 subgroups, including LVEF, HHF, and diabetes. However, those results relied on separate statistical tests for subgroup variables considered one at a time. Such analyses do not necessarily imply homogeneous treatment effects with respect to those subgroup variables jointly, and they depend on dichotomous decisions based on $p$-value thresholds. In contrast, our proposed method does not test subgroups one at a time. Instead, it directly targets a new decision population and assumes a transportable underlying CATE function, which it estimates using the subgroup information reported by the trials. This approach avoids test-based decisions and directly targets a well-defined estimand for decision-making. In this application, we estimated the average treatment effect in a new target population using aggregate data reported from the trials included in that meta-analysis, and we explored joint subgroup treatment effects in the target population.

\subsection{Data}

We used aggregate data extracted from published reports of four large randomized cardiovascular outcome trials (CVOTs) of SGLT2 inhibitors in patients with heart failure: EMPEROR-Reduced, EMPEROR-Preserved, DAPA-HF, and DELIVER. The outcome of interest was the primary composite outcome defined in all four trials: cardiovascular death or hospitalization for heart failure. We extracted the marginal treatment effects and one-at-a-time subgroup treatment effects from the four trials; Table \ref{table1} summarizes the extracted treatment-effect data. We also extracted summary statistics for each covariate from the trial reports; Table \ref{table2} summarizes the extracted covariate information.

\begin{table}
\caption{Treatment effects in trials}
\renewcommand{\arraystretch}{1.5}
\centering
\begin{tabular}{lllllllll}
\toprule
\multirow{2}{*}{Trial} & \multirow{2}{*}{Subgroup} & \multirow{2}{*}{Level} & \multicolumn{2}{l}{Treatment group} & \multicolumn{2}{l}{Control group} & \multirow{2}{*}{Risk difference} & \multirow{2}{*}{SE} \\
  &       &        & Event   & N   & Event     & N     &           &       \\
\midrule
EMPEROR-Preserved & overall & —  & 415  & 2997 & 511 & 2991 & -0.032 & 0.009   \\
EMPEROR-Preserved & LVEF  & 40-49\%  & 145   & 995   & 193     & 988   & -0.050  & 0.017   \\
EMPEROR-Preserved & LVEF    & 50-59\%  & 138   & 1028   & 173  & 1030   & -0.034  & 0.016   \\
EMPEROR-Preserved & LVEF    & $\ge$60\%   & 132  & 974   & 145   & 973   & -0.014     & 0.016    \\
EMPEROR-Preserved & preHHF& yes& 157& 699& 192& 670& -0.062& 0.023\\
EMPEROR-Preserved & preHHF& no& 258& 2298& 319& 2321& -0.025& 0.009\\
EMPEROR-Preserved & diabetes   & yes   & 239    & 1466  & 291  & 1472   & -0.035   & 0.014    \\
EMPEROR-Preserved & diabetes    & no    & 176   & 1531   & 220   & 1519  & -0.030    & 0.012  \\
DELIVER           & overall     & —   & 475    & 3131   & 577    & 3132   & -0.033    & 0.009    \\
DELIVER           & LVEF      & 40-49\%   & 193   & 1067  & 220    & 1049  & -0.029   & 0.017  \\
DELIVER           & LVEF        & 50-59\%   & 161   & 1133  & 196  & 1123   & -0.032  & 0.015   \\
DELIVER           & LVEF        & $\ge$60\%   & 121 & 931   & 161  & 960  & -0.038  & 0.016     \\
DELIVER           & preHHF& yes& 184& 829& 230& 805& -0.064& 0.021\\
DELIVER           & preHHF& no& 291& 2302& 347& 2327& -0.023& 0.010    \\
DELIVER           & diabetes   & yes  & 248   & 1401 & 298  & 1405  & -0.035  & 0.015    \\
DELIVER           & diabetes     & no  & 227   & 1730     & 279  & 1727   & -0.030  & 0.012  \\
DAPA-HF           & overall     & —   & 382   & 2373     & 495  & 2371   & -0.048  & 0.011    \\
DAPA-HF           & preHHF& yes& 117& 638& 181& 663& -0.089& 0.023\\
DAPA-HF           & preHHF& no& 265& 1735& 314& 1708& -0.031& 0.013   \\
DAPA-HF           & diabetes     & yes   & 213      & 1075     & 268   & 1064     & -0.054   & 0.018   \\
DAPA-HF           & diabetes     & no    & 169      & 1298     & 227    & 1307     & -0.043   & 0.014    \\
EMPEROR-Reduced   & overall      & —    & 361      & 1863     & 462   & 1867     & -0.054  & 0.014   \\
EMPEROR-Reduced   & preHHF& yes& 153& 577& 177& 574& -0.043& 0.027\\
EMPEROR-Reduced   & preHHF& no& 208& 1286& 285& 1293& -0.058& 0.015    \\
EMPEROR-Reduced   & diabetes    & yes    & 200   & 927     & 265   & 929   & -0.070  & 0.020    \\
EMPEROR-Reduced   & diabetes    & no    & 161    & 936     & 197   & 938      & -0.038   & 0.018    \\
\bottomrule
\end{tabular}
\label{table1}
\end{table}

\begin{table}
\caption{Summary statistics of covariates}
\renewcommand{\arraystretch}{1.5}
\centering
\begin{tabular}{llllll}
\toprule
Trial             & Covariate    & Type     & N    & Mean  & SD  \\
\midrule
EMPEROR-Preserved & LVEF         & Continuous & 5988 & 54.3  & 8.8 \\
EMPEROR-Preserved & preHHF-yes  & Binary     & 5988 & 0.229 & —   \\
EMPEROR-Preserved & diabetes-yes & Binary     & 5988 & 0.491 & —   \\
DELIVER           & LVEF         & Continuous & 6263 & 54.0  & 8.9 \\
DELIVER           & preHHF-yes  & Binary     & 6263 & 0.405 & —   \\
DELIVER           & diabetes-yes & Binary     & 6263 & 0.448 & —   \\
DAPA-HF           & LVEF         & Continuous & 4744 & 31.1  & 6.8 \\
DAPA-HF           & preHHF-yes  & Binary     & 4744 & 0.474 & —   \\
DAPA-HF           & diabetes-yes & Binary     & 4744 & 0.451 & —   \\
EMPEROR-Reduced   & LVEF         & Continuous & 3730 & 27.2  & 6.1 \\
EMPEROR-Reduced   & preHHF-yes  & Binary     & 3730 & 0.309 & —   \\
EMPEROR-Reduced   & diabetes-yes & Binary     & 3730 & 0.498 & —   \\
\bottomrule
\end{tabular}
\label{table2}
\end{table}

\subsection{Analysis}

We applied the proposed method to transport the treatment effects reported in the published trials to a new target population, namely the PULSE study \citep{bellanca2023incidence}, a linked primary and secondary care database for heart failure in the UK. We considered three effect modifiers: LVEF, history of hospitalization for heart failure (HHF), and diabetes status. The PULSE study included 383,896 patients with heart failure, with a mean age of 75 years, 53\% male, a mean LVEF of 45.4, 9.1\% with a history of HHF, and 26.5\% with diabetes (see \citet{bellanca2023incidence} for additional details). We used the reported covariate summary statistics from this target population to generate target-population covariate data using a parametric copula model. We then used the resulting generated covariate data to marginalize the estimated CATE function over the target population. We estimated the marginal average treatment effect in the target population and eight subgroup treatment effects defined jointly by the three covariates. We implemented the proposed method using the \texttt{R} package \texttt{CIMAgD}.

\subsection{Results}

Table \ref{table3} summarizes the results. Overall, SGLT2 inhibitors were associated with 37 fewer composite events of cardiovascular death or hospitalization for heart failure per 1000 patients in the target population. Across the joint subgroups, the estimated treatment effect ranged from a reduction of 58 events per 1000 patients with LVEF $\le 40\%$, a history of HHF, and diabetes to a reduction of 28 events per 1000 patients with LVEF $> 40\%$, no history of HHF, and no diabetes. These results suggest treatment-effect heterogeneity across clinically relevant subgroups in the target population and show how the proposed method can be used to quantify such heterogeneity for decision-making.

Table \ref{table3} summarizes the estimated treatment effects in the target population.

\begin{table}
\caption{Estimated treatment effects in the target population}
\renewcommand{\arraystretch}{1.5}
\centering
\begin{tabular}{llllll}
\toprule
& \multicolumn{3}{l}{Subgroup variables} &        &                  \\
& LVEF    & preHHF    & diabetes & Risk difference & 95\% CI           \\
\midrule
Overall   & --              & --     & --       & -0.037  & (-0.049, -0.025) \\
          & $\le 40$        & no     & no       & -0.044  & (-0.063, -0.025) \\
          & $> 40$          & no     & no       & -0.028  & (-0.044, -0.012) \\
          & $\le 40$        & yes    & no       & -0.047  & (-0.084, -0.010) \\
          & $> 40$          & yes    & no       & -0.032  & (-0.071, -0.007) \\
          & $\le 40$        & no     & yes      & -0.055  & (-0.090, -0.020) \\
          & $> 40$          & no     & yes      & -0.039  & (-0.070, -0.009) \\
          & $\le 40$        & yes    & yes      & -0.058  & (-0.079, -0.037) \\
          & $> 40$          & yes    & yes      & -0.042  & (-0.063, -0.022) \\
\bottomrule
\end{tabular}
\label{table3}
\end{table}

\section{DISCUSSION}

Conventional meta-analyses based on aggregate data often do not target a clearly defined decision population. When covariate distributions differ between trial populations and the population in which decisions will be made, trial-level marginal effects need not equal the average treatment effect in that target population. In this paper, we addressed this problem by extending recent work on causally interpretable meta-analysis to the setting in which only aggregate trial data are available \citep{dahabreh2020toward,dahabreh2023efficient}. Our approach uses a transportable CATE function to link trial results to a target-population estimand and thereby supports decision-making in a clearly defined population.

The proposed method uses only the types of aggregate information that are often available from randomized trials, namely treatment-effect estimates, subgroup treatment-effect estimates, and descriptive covariate summaries. It combines these summaries under an assumption that the CATE is transportable, and uses a base distribution $Q$ to connect information across trials and the target population. The resulting estimator remains grounded in a clearly defined target population and supports standard large-sample inference under the stated assumptions. Importantly, the method does not require a large number of trials; rather, its validity depends on large-sample behavior within trials and on the identification conditions for the chosen CATE model. For this reason, the same approach can also be used in settings with only a few trials or even a single trial when the identifying information is sufficient.

Our simulation study showed that the proposed method performed well in the settings we considered, including both multiple-trial and single-trial settings. The method produced estimates that were close to the target-population estimand and performed similarly to the IPD g-formula benchmark. In contrast, conventional meta-analysis and meta-regression performed less well for this task because they do not directly target the same estimand or use the same subgroup information.

Recent work has developed several methods that combine individual participant data and aggregate data in evidence synthesis, including matching-adjusted indirect comparison (MAIC) \citep{signorovitch2012matching}, simulated treatment comparison (STC) \citep{remiro2022parametric}, multilevel network meta-regression (ML-NMR) \citep{phillippo2020multilevel}, and more recent methods that integrate IPD and aggregate data within causal or network-based frameworks \citep{rott2025causally,vo2025integration}. These methods address important settings, particularly when individual-level data are available from at least one study. In contrast, our approach requires only aggregate trial data, targets a user-specified population rather than a trial-defined population, and identifies the target-population estimand through an explicit transportable CATE model. It also provides a large-sample theory that makes the role of aggregate-data uncertainty transparent. In this sense, our method extends causally interpretable evidence synthesis to settings where individual-level trial data are unavailable while retaining a more explicit causal identification and inferential framework.

A limitation of our method is that it relies on subgroup treatment effects reported by the trials. These subgroup summaries provide important information for identifying the CATE function. In practice, trials may omit subgroup results. Nevertheless, our method allows using subgroup information in a flexible way when trials omit some of subgroup results, report them inconsistently across studies, or define subgroups in ways that are not well aligned across trials. However, when the available subgroup information is sparse or poorly aligned with the chosen CATE model, identification may fail or the resulting estimates may become unstable.

In conclusion, we developed a method for estimating average treatment effects in a user-specified target population using aggregate data from randomized trials. By linking reported trial summaries to a transportable CATE function, the method enables causally interpretable evidence synthesis when individual-level trial data are unavailable. This framework may be particularly useful for clinical decision-making, guideline development, and health technology assessment, which often rely on published aggregate data and require treatment-effect estimates for clearly defined target populations rather than ill-defined mixtures of trial populations.

\section{ACKNOWLEDGMENTS}

We thank Veerle Coupé, Paul Gustafson, Spyros Balafas, and Steef Konings for their helpful comments and feedback on an early stage of this work. The authors obtained all results presented in the manuscript and drafted and revised the manuscript. Key mathematical statements were verified using ChatGPT version 5.4 (OpenAI, https://chatgpt.com/). The authors are responsible for the final version of all results and the entire manuscript.

\section{DATA AVAILABILITY}

The code for the simulation study is publicly accessible at: \url{https://github.com/qingyshi/CIMA-AgD}. The data used in the application were extracted from the published studies and are presented in the tables.

\bibliographystyle{unsrtnat}
\bibliography{references}

\clearpage
\begin{appendices}
\section{METHODS}

\subsection{Subgroup estimands}
\label{appendix1}

In addition to the marginal treatment effect in the target population, we may also be interested in the subgroup-specific treatment effects in terms of the subgroup variable $X_k$ in the target population, defined as:
\[
\psi(x_k) \equiv \mathbb{E}[Y(1) - Y(0) | X_k = x_k, S = 0],
\]
or the joint subgroup, for example, for both $X_1$ and $X_2$ as:
\[
\psi(x_1,x_2) \equiv \mathbb{E}[Y(1) - Y(0) | X_1 = x_1, X_2 = x_2, S = 0].
\]

With assumptions \textit{A1} and \textit{A2}, we express the subgroup estimand in the same way as the marginal estimand:
\[
\psi(x_k) \equiv \mathbb{E}[Y(1) - Y(0) | X_k = x_k, S = 0] = \mathbb{E}[g(X) | X_k = x_k, S = 0].
\]

Whenever we obtain the estimated CATE parameters $\widehat \theta$ by the proposed estimator, the plug-in estimator follows immediately:
\[
\widehat \psi(x_k) = \frac{1}{|\mathcal{G}_k|} \sum_{i:i \in \mathcal{G}_k} g(X_i; \widehat \theta),
\]
where $\mathcal{G}_k = \{i: S_i = 0, X_{i,k} = x_k\}$ and $|\mathcal{G}_k| = \sum_i I(i \in \mathcal{G}_k)$.

The sampling variance estimator follows similarly as illustrated in Appendix \ref{appendix6}, and a standard Wald-type (point-wise) confidence interval can be constructed by $\widehat \psi(x_k) \pm 1.96 \sqrt{\widehat{\mathrm{Var}}(\widehat \psi(x_k))}$. We omit the details here.

\subsection{Relative treatment effects}
\label{appendix2}

In the main text, we focus on estimating the canonical CATE function based on absolute treatment effects. One might contend that, conditional on covariates, a relative CATE function is more suitable for transportability. In this appendix, we present the details on using relative treatment effects for a relative CATE function and demonstrate how the approach developed in the main text can be adapted to the relative case. With a slight abuse of notation, we denote the relative CATE function by $g_s(x)$ and the relative treatment effects by $\tau_s$ throughout this appendix. For brevity, we omit conditions that are identical to those in the main text.

For each $s = 0, \ldots, m$, we define the relative CATE function as:
\[
g_s(x) \equiv \mathbb{E}[Y(1)|X = x, S = s] / \mathbb{E}[Y(0)|X = x, S = s].
\]

We assume that:

\textit{B1. Transportability of relative CATE function across the trials and the target population}: $g_s(x) = g(x) \equiv \mathbb{E}[Y(1)|X = x] / \mathbb{E}[Y(0)|X = x]$ for all $s = 0, \ldots, m$ and $x \in \{x: p(x, S = 0) > 0\}$.

Let $\tau_s$ be the relative treatment effects for both marginal and subgroups in trials $s = 1, \ldots, m$, by the assumption \textit{B1}, we express the relative effects $\tau_s$ in trials $s = 1, \ldots, m$ as the functional of the relative CATE function $g$ as:
\[
\tau_{s,0,0} \equiv \frac{\mathbb{E}[Y(1)|S = s]}{\mathbb{E}[Y(0)|S = s]} = \frac{\mathbb{E}[g(X) b_s(X)|S = s]}{\mathbb{E}[b_s(X)|S = s]} \equiv \tau_{s,0,0}(g, b_s),
\]
and the subgroup treatment effects as:
\[
\tau_{s,k,l} \equiv \frac{\mathbb{E}[Y(1)| X_k = x_{k,l}, S = s]}{\mathbb{E}[Y(0)| X_k = x_{k,l}, S = s]} = \frac{\mathbb{E}[g(X) b_s(X)|X_k = x_{k,l}, S = s]}{\mathbb{E}[b_s(X)|X_k = x_{k,l}, S = s]} \equiv \tau_{s,k,l}(g, b_s),
\]
where $\tau_{s,0,0}(g, b_s)$ and $\tau_{s,k,l}(g, b_s)$ are bounded linear functionals given the fixed control group function $b_s$ for each trial $s$: $b_s(x) \equiv \mathbb{E}[Y(0)|X = x, S = s]$. We will keep the control group function in a later discussion and, for now, treat it as fixed and known.

Let $Q$ be the base probability measure, there exist some fixed functions $\alpha_s = (\alpha_{s,1}, \ldots, \alpha_{s,J_s})$ that represent the functionals $\tau_s(g)$ such that:
\[
\tau_s(g) = \int \alpha_s g dQ,
\]
for any $g \in L^2(Q)$ and for all $s = 1, \ldots, m$.

Under the exponential tilting, we write the working representer functions as
\begin{align*}
    & \alpha_{s,0,0}(x; \eta_s) \equiv \frac{w_s(x; \eta_s) b_s(x)}{\mathbb{E}_Q[w_s(X; \eta_s) b_s(X)]}; \\
    & \alpha_{s,k,l}(x; \eta_s) \equiv \frac{w_s(x; \eta_s) b_s(x) I(X_k = x_{k,l})}{\mathbb{E}_Q[w_s(X; \eta_s) b_s(X) I(X_k = x_{k,l})]},
\end{align*}
for all $s = 1, \ldots, m$, $k = 1, \ldots, K_s$, and $l = 1, \ldots, L_{s,k}$.

Let $\tau_0 = (\tau_1, \ldots, \tau_J)$ be all relative treatment effects, and $\alpha = (\alpha_1, \ldots, \alpha_J)$ be the corresponding functions stacked. We parametrize the relative CATE function as $g(\theta)$ with $d$-dimensional parameter $\theta \in \Theta \subset \mathbb{R}^d$. By adapting Proposition 1, we identify the relative CATE function by the stacked moment equations:
\[
\int \alpha(\eta_0) g(\theta_0) dQ - \tau_0 = 0_J.
\]

We note that the identification is the same as for the additive CATE function; the only distinction lies in the different working representer functions $\alpha$ for the relative treatment effects. It is clear that the functions $\alpha$ for the relative effects also depend on the control group function, in contrast to the absolute effects, which do not. This phenomenon is also connected to Jensen’s inequality and to non-collapsibility.

Let $\widehat \tau = (\widehat \tau_1, \ldots, \widehat \tau_J)$ be all relative treatment effects estimated from trials, we define the sample moments as:
\[
\widehat M(\theta, \widehat \tau, \widehat \eta) = n_q^{-1} \sum_{i\in \mathcal{I}_q} \alpha(X_i; \widehat \eta) g(X_i; \theta) - \widehat \tau,
\]
where $\widehat \eta = (\widehat \eta_1, \ldots, \widehat \eta_m)$ are estimated exponential tilting parameters by solving the sample moment equations.

The CATE parameters $\theta$ are estimated by generalized method of moments:
\[
\widehat \theta = \arg \min_{\theta \in \Theta} \widehat M(\theta, \widehat \tau, \widehat \eta)^T W \widehat M(\theta, \widehat \tau, \widehat \eta),
\]
where $W$ is some positive-definite weighting matrix.

With the estimated relative CATE parameters $\widehat \theta$, we estimate the target estimand by the direct plug-in estimator:
\[
\widehat \psi = n_0^{-1} \sum_{i\in \mathcal{I}_0} g(X_i; \widehat \theta) b_0(X_i) - n_0^{-1} \sum_{i\in \mathcal{I}_0} Y_i.
\]
Here, we invoke: \textit{B2. uniform use of control} assumption, which states that $S = 0 \implies A = 0$. We refer details to \citep{dahabreh2024learning}.

Because the control group functions $b_s(x)$ for trials are unavailable, we leverage the control group information from the target population and assume that:

\textit{B3. Transportability of control group mean function across the trials and the target population}: $b_s(x) = b(x) \equiv \mathbb{E}[Y(0)|X = x]$ for all $s = 0, \ldots, m$.

By \textit{B2} uniform use of control, we have $b_0(x) \equiv \mathbb{E}[Y(0)|X = x, S = 0] = \mathbb{E}[Y|X = x, S = 0]$, and by \textit{B3}, we have $b_0(x) = b_s(x) = b(x)$ for $s = 1, \ldots, m$, which gives the control group functions for trials.

\subsection{Proposition 1}
\label{appendix3}

\begin{proof}[Proof of Proposition 1]
\textbf{1.1 Validity of the moment condition.}
By \textit{A6}, evaluated at \(\theta=\theta_0\),
\[
\int \alpha(x;\eta_0)\,g(x;\theta_0)\,dQ(x) = \int \alpha_0(x)\,g(x;\theta_0)\,dQ(x).
\]
Using \textit{A4}, namely \(g(\cdot;\theta_0)=g_0(\cdot)\), we obtain
\[
\int \alpha(x;\eta_0)\,g(x;\theta_0)\,dQ(x) = \int \alpha_0(x)\,g_0(x)\,dQ(x).
\]
By representation and \textit{A3},
\[
\int \alpha_0(x)\,g_0(x)\,dQ(x)=\tau(g_0),
\]
and by \textit{A1}, \(\tau(g_0)=\tau_0\). Hence
\[
M(\theta_0) = \int \alpha(x;\eta_0)\,g(x;\theta_0)\,dQ(x)-\tau_0 = 0_J.
\]

\textbf{1.2 Uniqueness.}
Let \(\theta\in\Theta\) satisfy \(M(\theta)=0_J\), that is,
\[
\int \alpha(x;\eta_0)\,g(x;\theta)\,dQ(x)=\tau_0.
\]
By \textit{A6}, this implies
\[
\int \alpha_0(x)\,g(x;\theta)\,dQ(x)=\tau_0.
\]
Since \(M(\theta_0)=0_J\), subtracting the two equalities yields
\[
\int \alpha_0(x)\,\bigl(g(x;\theta)-g(x;\theta_0)\bigr)\,dQ(x)=0_J.
\]
Writing \(\alpha_0=(\alpha_1,\dots,\alpha_J)\), this is equivalent to
\[
\left\langle \alpha_j,\; g(\cdot;\theta)-g(\cdot;\theta_0)\right\rangle_{L^2(Q)}=0
\qquad\text{for all } j=1,\dots,J,
\]
or, equivalently,
\[
g(\cdot;\theta)-g(\cdot;\theta_0) \in \mathrm{span}\{\alpha_1,\dots,\alpha_J\}^\perp.
\]
By \textit{A5}, this can hold only if \(\theta=\theta_0\). Therefore, \(\theta_0\) is the unique element of \(\Theta\) satisfying
\[
\int \alpha(x;\eta_0)\,g(x;\theta)\,dQ(x)-\tau_0=0_J.
\]
\end{proof}

\subsection{Proposition 2}
\label{appendix4}

Throughout this appendix, we assume standard regularity conditions and refer the details to \citet{newey1994large}.

\begin{Lemma}[Consistency]
Under Proposition 1, Condition \textit{C1}, and standard regularity conditions, the estimators are consistent:
\[
\widehat{\psi} \xrightarrow{p} \psi_0,
\qquad
\widehat{\theta} \xrightarrow{p} \theta_0, \qquad \widehat \eta \xrightarrow{p} \eta_0.
\]
\end{Lemma}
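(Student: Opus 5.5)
The plan is to prove consistency in three stages that follow the estimation pipeline: first $\widehat\eta$, then $\widehat\theta$, then $\widehat\psi$. Each stage uses the standard extremum-estimator consistency theorem of \citet{newey1994large} (Theorem 2.1 for GMM and its Z-estimator analogue). That theorem requires three things: identification, compactness (or a convexity substitute), and uniform convergence of the sample criterion to its population limit. Consistency of the plugged-in nuisance estimates from earlier stages is propagated through continuity.

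\textbf{Step 1: tilting parameters.} For each $s$, $\widehat\eta_s$ solves $n_q^{-1}\sum_{i\in\mathcal I_q} w_s(X_i;\eta_s)h_s^+(X_i)=\widehat\mu_s^+$. This equation is the first-order condition of the strictly convex function
\[
\eta_s\mapsto n_q^{-1}\sum_{i\in\mathcal I_q} \exp\{\eta_s^T h_s^+(X_i)\}-\eta_s^T\widehat\mu_s^+ .
\]
Convexity lets us use the convexity lemma (Newey--McFadden Theorem 2.7). Under it, pointwise convergence suffices and compactness is not needed.

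Pointwise convergence comes from two facts. The law of large numbers applied to the $Q$-sample gives convergence of the first term. Condition C1 gives $\widehat\mu_s\to\mu_{s,0}$ in probability, since the linear term is $O_p(n_s^{-1/2})$.

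The population objective is strictly convex provided $\mathrm{Var}_Q(h_s(X))$ is nonsingular. Its minimizer is $\eta_{s,0}$, which is the unique solution of the moment-matching equations. Existence of that solution is assumed as a regularity condition.

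\textbf{Step 2: CATE parameters.} Write $\widehat M(\theta,\widehat\tau,\widehat\eta)=\widehat A(\theta,\widehat\eta)-\widehat\tau$, where $\widehat A(\theta,\eta)$ is the $Q$-sample average of $\alpha(X;\eta)g(X;\theta)$. The goal is to show
\[
\sup_{\theta\in\Theta}\|\widehat M(\theta,\widehat\tau,\widehat\eta)-M(\theta)\|\to 0
\]
in probability. We split the difference into three pieces:
\[
\{\widehat A(\theta,\widehat\eta)-A(\theta,\widehat\eta)\}+\{A(\theta,\widehat\eta)-A(\theta,\eta_0)\}-(\widehat\tau-\tau_0).
\]
\begin{itemize}
\item The last piece vanishes by C1.
\item The first piece is handled by a uniform law of large numbers over $\Theta\times\mathcal N(\eta_0)$, with both sets compact. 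This needs continuity of $(\theta,\eta)\mapsto\alpha(x;\eta)g(x;\theta)$ and an integrable envelope. Because $\widehat\eta$ lies in $\mathcal N(\eta_0)$ with probability approaching one, the bound applies at $\widehat\eta$.
\item The middle piece vanishes by uniform continuity of $A$ on the compact set together with Step 1.
\end{itemize}
One technicality is that $\alpha_{s,k,l}$ contains the random normalizer $\mathbb E_Q[w_sI(X_k=x_{k,l})]$. In the sample version this is estimated by a $Q$-average. Since the stratum has positive $Q$-probability, this normalizer is bounded away from zero near $\eta_0$, so the ratio is handled by continuous mapping.

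Given uniform convergence, $\widehat M^TW\widehat M\to M^TWM$ uniformly. Proposition 1 shows that $M(\theta)=0$ only at $\theta_0$, and $W$ is positive definite, so $M^TWM$ is uniquely minimized at $\theta_0$. Hence $\widehat\theta\to\theta_0$ in probability.

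\textbf{Step 3: target estimand.} Decompose
\[
\widehat\psi-\psi_0=n_0^{-1}\sum_{i\in\mathcal I_0}\{g(X_i;\widehat\theta)-g(X_i;\theta_0)\}+\Bigl\{n_0^{-1}\sum_{i\in\mathcal I_0} g(X_i;\theta_0)-\psi_0\Bigr\}.
\]
The second term vanishes by the law of large numbers, because $\psi_0=\mathbb E_{P_0}[g(X;\theta_0)]$ under A1, A2 and A4. The first term is bounded by $\sup_{\theta}|n_0^{-1}\sum_{i\in\mathcal I_0} g(X_i;\theta)-\mathbb E_{P_0}g(X;\theta)|$ plus $|\mathbb E_{P_0}g(X;\widehat\theta)-\mathbb E_{P_0}g(X;\theta_0)|$. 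The supremum vanishes by a uniform law of large numbers over $\Theta$. The remaining difference vanishes by continuity and Step 2.

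\textbf{Main obstacle.} The hardest part is the uniform convergence in Step 2, because the estimated nuisance $\widehat\eta$ enters nonlinearly through both the exponential weights and the subgroup normalizers. We plan to handle this by restricting to a compact neighborhood of $\eta_0$ and assuming a dominating envelope. A sufficient envelope is $\sup_{\eta\in\mathcal N,\theta\in\Theta}|w_s(x;\eta)g(x;\theta)|\le F(x)$ with $\mathbb E_Q F<\infty$, which holds, for example, when $h_s$ is bounded on the support of $Q$. With such an envelope, the argument reduces to Newey--McFadden Lemma 2.4.
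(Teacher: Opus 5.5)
Your proposal is correct and follows the same template as the paper's proof. Both use consistency of $\widehat\tau_s$ and $\widehat\mu_s$ from \textit{C1}, uniform convergence of the sample moment functions over compact parameter sets, identification of $\theta_0$ from Proposition 1, continuity, and the standard Newey--McFadden extremum-estimator argument; you stage it sequentially ($\widehat\eta$, then $\widehat\theta$ with $\widehat\eta$ plugged in via a uniform law over $\Theta\times\mathcal N(\eta_0)$, then $\widehat\psi$) and make the regularity conditions explicit. The one genuinely added ingredient is the convexity lemma for $\widehat\eta_s$: it removes the need for compactness of the tilting-parameter space and, given nonsingular $\mathrm{Var}_Q(h_s(X))$, supplies uniqueness of $\eta_{s,0}$ from strict convexity. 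The paper's appeal to Proposition 1 leaves that uniqueness implicit, since Proposition 1 only identifies $\theta_0$ given $\eta_0$.
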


\begin{proof}[Proof of Lemma 1]
Assume the parameter spaces are compact and that the sample moment functions satisfy a uniform law of large numbers over the relevant parameter spaces. By Condition \textit{C1},
\[
\widehat{\mu}_s \xrightarrow{p} \mu_{s,0},
\qquad
\widehat{\tau}_s \xrightarrow{p} \tau_{s,0},
\qquad s=1,\dots,m.
\]
Hence the sample moment equations converge uniformly in probability to their population counterparts. Under Proposition 1, the population moment conditions are globally identified at the true parameter values, and the moment functions are continuous in the parameters. Therefore, by the standard consistency argument for moment estimators, the solution to the sample moment equations converges in probability to the unique solution of the population system:
\[
\widehat{\psi} \xrightarrow{p} \psi_0,
\qquad
\widehat{\theta} \xrightarrow{p} \theta_0, 
\qquad 
\widehat \eta \xrightarrow{p} \eta_0.
\]
\end{proof}

\begin{proof}[Proof of Proposition 2]
Assume throughout that the relevant sample Jacobians satisfy uniform laws of large numbers in a neighborhood of the truth, that \(J_\theta^m\) has full column rank (i.e., \textit{C2}), and that \(\mathbb E_Q[h_s^+(X)\partial_\eta w_s(X,\eta_{s,0})]\) is invertible for each \(s\).

\textbf{2.1 First part.}
By definition,
\[
\widehat\psi = n_0^{-1}\sum_{i\in\mathcal I_0} g(X_i,\widehat\theta).
\]
Using the integral form of the first-order Taylor expansion,
\[
g(X_i,\widehat\theta)-g(X_i,\theta_0) =
\left\{\int_0^1 \partial_\theta g\bigl(X_i,\theta_0+t(\widehat\theta-\theta_0)\bigr)\,dt\right\}
(\widehat\theta-\theta_0).
\]
Hence
\begin{align*}
\widehat\psi-\psi_0 &=
n_0^{-1}\sum_{i\in\mathcal I_0}\{g(X_i,\theta_0)-\psi_0\} \\
&\quad +
\left\{
\int_0^1 n_0^{-1}\sum_{i\in\mathcal I_0}
\partial_\theta g\bigl(X_i,\theta_0+t(\widehat\theta-\theta_0)\bigr)\,dt
\right\}
(\widehat\theta-\theta_0).
\end{align*}
By consistency of \(\widehat\theta\) (i.e., Lemma 1) and the uniform law of large numbers for \(\partial_\theta g\),
\[
\int_0^1 n_0^{-1}\sum_{i\in\mathcal I_0}
\partial_\theta g\bigl(X_i,\theta_0+t(\widehat\theta-\theta_0)\bigr)\,dt
= J_\theta^\psi + o_p(1),
\]
where
\[
J_\theta^\psi = \mathbb E_{P_0}[\partial_\theta g(X,\theta_0)].
\]
Therefore,
\[
\widehat\psi-\psi_0 =
n_0^{-1}\sum_{i\in\mathcal I_0}\phi_0^\psi(X_i;\theta_0)
+ J_\theta^\psi(\widehat\theta-\theta_0)
+ o_p(\|\widehat\theta-\theta_0\|),
\]
which implies the stated representation since
\[
o_p(\|\widehat\theta-\theta_0\|) = o_p\bigl(n_0^{-1/2}+\|\widehat\theta-\theta_0\|\bigr).
\]

\textbf{2.2 Second part.}
Let
\[
\widehat J_\theta^m(\theta,\eta) \equiv n_q^{-1}\sum_{i\in\mathcal I_q}\alpha(X_i,\eta)\partial_\theta g(X_i,\theta),
\]
and recall the first-order condition
\[
\widehat J_\theta^m(\widehat\theta,\widehat\eta)^T W \widehat M(\widehat\theta,\widehat\tau,\widehat\eta)=0.
\]
Also define
\[
\widehat M_0 \equiv \widehat M(\theta_0,\tau_0,\eta_0) = n_q^{-1}\sum_{i\in\mathcal I_q} m(X_i,\theta_0,\tau_0,\eta_0).
\]

By the integral form of Taylor expansion,
\begin{align*}
\widehat M(\widehat\theta,\widehat\tau,\widehat\eta)
&= \widehat M_0
+ \widehat J_{\theta,*}^m(\widehat\theta-\theta_0)
+ \widehat J_{\tau,*}^m(\widehat\tau-\tau_0)
+ \widehat J_{\eta,*}^m(\widehat\eta-\eta_0),
\end{align*}
where the starred Jacobians are averages of sample derivatives along the segment joining \((\theta_0,\tau_0,\eta_0)\) and
\((\widehat\theta,\widehat\tau,\widehat\eta)\). By consistency and the assumed uniform laws of large numbers,
\[
\widehat J_{\theta,*}^m = J_\theta^m + o_p(1),\qquad
\widehat J_{\tau,*}^m = J_\tau^m + o_p(1),\qquad
\widehat J_{\eta,*}^m = J_\eta^m + o_p(1),
\]
where
\[
J_\theta^m = \mathbb E_Q[\alpha(X,\eta_0)\partial_\theta g(X,\theta_0)],
\qquad
J_\tau^m = -I_J,
\qquad
J_\eta^m = \bigl(J_{\eta_1}^m,\ldots,J_{\eta_m}^m\bigr).
\]
Hence,
\begin{align*}
\widehat M(\widehat\theta,\widehat\tau,\widehat\eta)
&= \widehat M_0
+ J_\theta^m(\widehat\theta-\theta_0)
+ J_\tau^m(\widehat\tau-\tau_0)
+ J_\eta^m(\widehat\eta-\eta_0) \\
&\quad +
o_p\!\left(
\|\widehat\theta-\theta_0\|
+ \|\widehat\tau-\tau_0\|
+ \|\widehat\eta-\eta_0\|
\right).
\end{align*}

Substituting this into the first-order condition yields
\begin{align*}
0 &=
\widehat J_\theta^m(\widehat\theta,\widehat\eta)^T W
\Bigl[
\widehat M_0
+ J_\theta^m(\widehat\theta-\theta_0)
+ J_\tau^m(\widehat\tau-\tau_0)
+ J_\eta^m(\widehat\eta-\eta_0)
\Bigr] \\
&\quad +
o_p\!\left(
\|\widehat\theta-\theta_0\|
+ \|\widehat\tau-\tau_0\|
+ \|\widehat\eta-\eta_0\|
\right).
\end{align*}
Since
\[
\widehat J_\theta^m(\widehat\theta,\widehat\eta)
= J_\theta^m+o_p(1),
\]
and \((J_\theta^m)^T W J_\theta^m\) is invertible, we can solve for \(\widehat\theta-\theta_0\):
\begin{align*}
\widehat\theta-\theta_0 &=
\bigl\{J_m^\theta + o_p(1)\bigr\}
\Bigl[
\widehat M_0
+ J_\tau^m(\widehat\tau-\tau_0)
+ J_\eta^m(\widehat\eta-\eta_0)
\Bigr] \\
&\quad +
o_p\!\left(
\|\widehat\theta-\theta_0\|
+ \|\widehat\tau-\tau_0\|
+ \|\widehat\eta-\eta_0\|
\right),
\end{align*}
where
\[
J_m^\theta = -\bigl((J_\theta^m)^T W J_\theta^m\bigr)^{-1}(J_\theta^m)^T W.
\]

By \textit{C1},
\[
\widehat\tau_s-\tau_{s,0}
= n_s^{-1}\sum_{i\in\mathcal I_s}\phi_s^\tau(O_i)
+ o_p(n_s^{-1/2}),
\qquad s=1,\dots,m.
\]

It remains to expand \(\widehat\eta_s\). Recall that
\[
n_q^{-1}\sum_{i\in\mathcal I_q} w_s(X_i,\widehat\eta_s)h_s^+(X_i)
- \widehat\mu_s^+ = 0.
\]
Equivalently, defining
\[
\widehat G_s(\eta_s,\mu_s^+) \equiv
n_q^{-1}\sum_{i\in\mathcal I_q} w_s(X_i,\eta_s)h_s^+(X_i)-\mu_s^+,
\]
we have \(\widehat G_s(\widehat\eta_s,\widehat\mu_s^+)=0\).
Expanding around \((\eta_{s,0},\mu_{s,0}^+)\),
\begin{align*}
0 &=
\widehat G_s(\eta_{s,0},\mu_{s,0}^+) \\
&\quad +
\left\{
\int_0^1
n_q^{-1}\sum_{i\in\mathcal I_q}
h_s^+(X_i)\partial_\eta
w_s\bigl(X_i,\eta_{s,0}+t(\widehat\eta_s-\eta_{s,0})\bigr)
\,dt
\right\}
(\widehat\eta_s-\eta_{s,0}) \\
&\quad -
(\widehat\mu_s^+-\mu_{s,0}^+).
\end{align*}
By consistency and a uniform law of large numbers,
\[
\int_0^1
n_q^{-1}\sum_{i\in\mathcal I_q}
h_s^+(X_i)\partial_\eta
w_s\bigl(X_i,\eta_{s,0}+t(\widehat\eta_s-\eta_{s,0})\bigr)\,dt
= \mathbb E_Q[h_s^+(X)\partial_\eta w_s(X,\eta_{s,0})] + o_p(1).
\]
Therefore,
\begin{align*}
\widehat\eta_s-\eta_{s,0} &=
J_{\mu_s}^{\eta_s}
\Bigl[
n_q^{-1}\sum_{i\in\mathcal I_q}
\{w_s(X_i,\eta_{s,0})h_s^+(X_i)-\mu_{s,0}^+\}
- (\widehat\mu_s^+-\mu_{s,0}^+)
\Bigr] \\
&\quad +
o_p\bigl(n_q^{-1/2}+|\widehat\mu_s^+-\mu_{s,0}^+|\bigr),
\end{align*}
where
\[
J_{\mu_s}^{\eta_s} = -\mathbb E_Q[h_s^+(X)\partial_\eta w_s(X,\eta_{s,0})]^{-1}.
\]
If \(\widehat\mu_s^+=n_s^{-1}\sum_{i\in\mathcal I_s} h_s^+(X_i)\), then
\[
\widehat\mu_s^+-\mu_{s,0}^+ = n_s^{-1}\sum_{i\in\mathcal I_s}\{h_s^+(X_i)-\mu_{s,0}^+\},
\]
so
\begin{align*}
\widehat\eta_s-\eta_{s,0} &=
J_{\mu_s}^{\eta_s}
\left\{
n_q^{-1}\sum_{i\in\mathcal I_q}
\bigl(w_s(X_i,\eta_{s,0})h_s^+(X_i)-\mu_{s,0}^+\bigr) -
n_s^{-1}\sum_{i\in\mathcal I_s}
\bigl(h_s^+(X_i)-\mu_{s,0}^+\bigr)
\right\} \\
&\quad + o_p(n_q^{-1/2}+n_s^{-1/2}).
\end{align*}
In particular,
\[
\widehat\eta_s-\eta_{s,0}=O_p(n_q^{-1/2}+n_s^{-1/2}),
\qquad
\|\widehat\eta-\eta_0\|=O_p(r_n),
\]
where
\[
r_n \equiv n_q^{-1/2} + \sum_{s=1}^m n_s^{-1/2}.
\]

Substituting the expansions for \(\widehat M_0\), \(\widehat\tau_s-\tau_{s,0}\), and \(\widehat\eta_s-\eta_{s,0}\) into the expansion for \(\widehat\theta-\theta_0\), we obtain
\[
\widehat\theta-\theta_0 = O_p(r_n)+o_p\bigl(r_n+\|\widehat\theta-\theta_0\|\bigr).
\]
Since \(\widehat\theta \xrightarrow{p} \theta_0\), it follows that \(\widehat\theta-\theta_0=O_p(r_n)\). Plugging this back in gives
\begin{align*}
\widehat\theta-\theta_0 &=
n_q^{-1}\sum_{i\in\mathcal I_q} J_m^\theta m(X_i,\theta_0,\tau_0,\eta_0)
+
\sum_{s=1}^m
\left\{
n_s^{-1}\sum_{i\in\mathcal I_s}
J_m^\theta J_{\tau_s}^m \phi_s^\tau(O_i)
\right\} \\
&\quad +
\sum_{s=1}^m
J_m^\theta J_{\eta_s}^m J_{\mu_s}^{\eta_s}
\left\{
n_q^{-1}\sum_{i\in\mathcal I_q}
\bigl(w_s(X_i,\eta_{s,0})h_s^+(X_i)-\mu_{s,0}^+\bigr)
-
n_s^{-1}\sum_{i\in\mathcal I_s}
\bigl(h_s^+(X_i)-\mu_{s,0}^+\bigr)
\right\} \\
&\quad +
o_p\!\left(n_q^{-1/2}+\sum_{s=1}^m n_s^{-1/2}\right),
\end{align*}
which gives the asymptotic linear representation.
\end{proof}

\begin{proof}[Proof of Corollary 1]

Under Proposition 2, Corollary 1 follows trivially by the Central Limit Theorem.
    
\end{proof}

\subsection{Length of followup}
\label{appendix5}

We emphasize that assuming a common CATE function across trials does not imply that all parameters within that function must be identical for every trial. A clear illustration is when trials have different lengths of follow-up, which frequently occurs in meta-analyses. In such situations, we need to incorporate the differing impacts attributable to the varying follow-up lengths. At the same time, we point out that modeling time-varying effects using smoothing over time (for example, with splines) imposes a relatively strong parametric structure, especially when only aggregate data are available. Therefore, we instead suggest directly stratifying the followup lengths on the CATE function.

Let $t = (t_1, \ldots, t_m)$ be the discrete followup timepoints for each trial $s = 1, \ldots, m$. We note that different trials may have the same length of followup. A full-stratified CATE function follows:
\[
g_t(\theta) = \alpha_t + f_t(\beta),
\]
where $\alpha_t$ is the time-stratified baseline treatment effects, and $f_t(\beta)$ is the time-stratified heterogeneous treatment effects. For example, a linear function is given by: $f_t(\beta) = \beta_1 x_1(t) + \beta_2 x_2(t) + \cdots$ for each $t$. In some case, we may consider smoothing the heterogeneous treatment effects to be fixed across timepoints: $f_t(\beta) = f(\beta)$ for all $t$, but keep the baseline treatment effects stratified $\alpha_t = (\alpha_1, \ldots, \alpha_m)$. This results in a partially stratified CATE function. For any parametrization, as long as the identification assumptions hold, the estimator remains valid asymptotically.

Whenever we obtain the time-stratified CATE function, the plug-in estimator for any target population gives the time-stratified ATE. That is, for each followup timepoint reported in the trials, we obtain the corresponding ATEs at those timepoints in the target population, instead of a single ATE at a specific timepoint.

\subsection{Variance estimator}
\label{appendix6}

\textbf{6.1 Variance estimation.}
Under Proposition 2, define
\[
A_s \equiv J_{\eta_s}^m J_{\mu_s}^{\eta_s}\in\mathbb R^{J\times (R_s+1)},\qquad s=1,\ldots,m.
\]
Then the asymptotic linear representation for \(\widehat\theta\) can be written as
\[
\widehat\theta-\theta_0
=
J_m^\theta
\left\{
n_q^{-1}\sum_{i\in\mathcal I_q}\xi_q(X_i)
+
\sum_{s=1}^m n_s^{-1}\sum_{i\in\mathcal I_s}\xi_s(O_i)
\right\}
+
o_p(n^{-1/2}),
\]
where
\[
\xi_q(X)
=
m(X,\theta_0,\tau_0,\eta_0)
+
\sum_{s=1}^m
A_s\{w_s(X,\eta_{s,0})h_s^+(X)-\mu_{s,0}^+\},
\]
and
\[
\xi_s(O)
=
J_{\tau_s}^m \phi_s^\tau(O)
-
A_s\{h_s^+(X)-\mu_{s,0}^+\},
\qquad s=1,\ldots,m.
\]

Since the samples \(\mathcal I_q,\mathcal I_1,\ldots,\mathcal I_m\) are mutually independent and \(n/n_q\to\pi_q>0\) and \(n/n_s\to\pi_s>0\), Corollary 1 implies
\[
\sqrt n(\widehat\theta-\theta_0)\xrightarrow{d}N(0,V_\theta),
\qquad
V_\theta = J_m^\theta \Omega (J_m^\theta)^T,
\]
with
\[
\Omega
=
\pi_q\,\mathbb E_Q[\xi_q(X)\xi_q(X)^T]
+
\sum_{s=1}^m
\pi_s\,\mathbb E_{P_s}[\xi_s(O)\xi_s(O)^T].
\]

For the \(s\)-th trial block, write
\[
\Sigma_s^\tau
\equiv
\mathbb E_{P_s}[\phi_s^\tau(O)\phi_s^\tau(O)^T],\qquad
\Sigma_s^\mu
\equiv
\mathbb E_{P_s}[\phi_s^\mu(X)\phi_s^\mu(X)^T],
\]
and
\[
\Sigma_s^{\tau\mu}
\equiv
\mathbb E_{P_s}[\phi_s^\tau(O)\phi_s^\mu(X)^T].
\]
Then
\begin{align*}
\mathbb E_{P_s}[\xi_s(O)\xi_s(O)^T]
&=
\mathbb E_{P_s}
\Big[
\big\{J_{\tau_s}^m\phi_s^\tau(O)-A_s\phi_s^\mu(X)\big\}
\big\{J_{\tau_s}^m\phi_s^\tau(O)-A_s\phi_s^\mu(X)\big\}^T
\Big] \\
&=
J_{\tau_s}^m\Sigma_s^\tau (J_{\tau_s}^m)^T
+
A_s\Sigma_s^\mu A_s^T
-
J_{\tau_s}^m\Sigma_s^{\tau\mu}A_s^T
-
A_s(\Sigma_s^{\tau\mu})^T(J_{\tau_s}^m)^T.
\end{align*}

Therefore, a plug-in estimator of \(\Omega\) is
\[
\widehat\Omega
=
\widehat\pi_q\,\widehat\Gamma_q
+
\sum_{s=1}^m \widehat\pi_s\,\widehat\Gamma_s,
\qquad
\widehat\pi_q \equiv n/n_q,\ \widehat\pi_s \equiv n/n_s,
\]
where
\[
\widehat\Gamma_q
=
n_q^{-1}\sum_{i\in\mathcal I_q}
(\widehat\xi_{q,i}-\bar{\widehat\xi}_q)
(\widehat\xi_{q,i}-\bar{\widehat\xi}_q)^T,
\]
with
\[
\widehat\xi_{q,i}
=
m(X_i,\widehat\theta,\widehat\tau,\widehat\eta)
+
\sum_{s=1}^m
\widehat A_s
\{w_s(X_i,\widehat\eta_s)h_s^+(X_i)-\widehat\mu_s^+\},
\qquad
\bar{\widehat\xi}_q:=n_q^{-1}\sum_{i\in\mathcal I_q}\widehat\xi_{q,i},
\]
and
\[
\widehat A_s
=
\widehat J_{\eta_s}^m \widehat J_{\mu_s}^{\eta_s},
\]
with
\[
\widehat J_{\eta_s}^m
=
n_q^{-1}\sum_{i\in\mathcal I_q}
g(X_i,\widehat\theta)\partial_\eta \alpha_s(X_i,\widehat\eta_s),
\qquad
\widehat J_{\mu_s}^{\eta_s}
=
-
\left\{
n_q^{-1}\sum_{i\in\mathcal I_q}
h_s^+(X_i)\partial_\eta w_s(X_i,\widehat\eta_s)
\right\}^{-1}.
\]
For the \(s\)-th trial block, define
\[
\widehat\Gamma_s
=
J_{\tau_s}^m\widehat\Sigma_s^\tau(J_{\tau_s}^m)^T
+
\widehat A_s\widehat\Sigma_s^\mu\widehat A_s^T
-
J_{\tau_s}^m\widehat\Sigma_s^{\tau\mu}\widehat A_s^T
-
\widehat A_s(\widehat\Sigma_s^{\tau\mu})^T(J_{\tau_s}^m)^T.
\]
Then
\[
\widehat V_\theta
=
\widehat J_m^\theta \widehat\Omega (\widehat J_m^\theta)^T,
\qquad
\widehat{\mathrm{Var}}(\widehat\theta)
=
n^{-1}\widehat V_\theta,
\]
where
\[
\widehat J_m^\theta
=
-\bigl((\widehat J_\theta^m)^T W\widehat J_\theta^m\bigr)^{-1}
(\widehat J_\theta^m)^T W,
\qquad
\widehat J_\theta^m
=
n_q^{-1}\sum_{i\in\mathcal I_q}
\alpha(X_i,\widehat\eta)\partial_\theta g(X_i,\widehat\theta).
\]

For \(\widehat\psi\), under the additional independence between the target sample \(\mathcal I_0\) and the samples entering \(\widehat\theta\),
\[
\sqrt n(\widehat\psi-\psi_0)\xrightarrow{d}N(0,V_\psi),
\qquad
V_\psi
=
\pi_0\,\mathbb E_{P_0}[\phi_0^\psi(X;\theta_0)^2]
+
J_\theta^\psi V_\theta (J_\theta^\psi)^T.
\]
A consistent plug-in estimator is
\[
\widehat V_\psi
=
\widehat\pi_0\,\widehat\Gamma_0
+
\widehat J_\theta^\psi \widehat V_\theta (\widehat J_\theta^\psi)^T,
\qquad
\widehat{\mathrm{Var}}(\widehat\psi)=n^{-1}\widehat V_\psi,
\]
where
\[
\widehat\pi_0 \equiv n/n_0,\qquad
\widehat\Gamma_0
=
n_0^{-1}\sum_{i\in\mathcal I_0}
\{g(X_i,\widehat\theta)-\widehat\psi\}^2,
\qquad
\widehat J_\theta^\psi
=
n_0^{-1}\sum_{i\in\mathcal I_0}\partial_\theta g(X_i,\widehat\theta).
\]
Equivalently,
\[
\widehat{\mathrm{Var}}(\widehat\psi)
=
n_0^{-2}\sum_{i\in\mathcal I_0}\{g(X_i,\widehat\theta)-\widehat\psi\}^2
+
n^{-1}\widehat J_\theta^\psi \widehat V_\theta (\widehat J_\theta^\psi)^T.
\]

\textbf{6.2 Approximating \(\widehat\Sigma_s^\tau\), \(\widehat\Sigma_s^\mu\), and \(\widehat\Sigma_s^{\tau\mu}\).}
For \(\widehat\Sigma_s^\tau\), trial reports usually provide only the diagonal entries, through reported standard errors or confidence intervals for the treatment-effect estimates. The off-diagonal entries are rarely available. We write
\[
\widehat\Sigma_s^\tau
=
\mathrm{diag}(\widehat\sigma_s)\widehat C_s^\tau \mathrm{diag}(\widehat\sigma_s),
\]
where \(\widehat\sigma_s\) is the vector of reported standard errors and \(\widehat C_s^\tau\) is an unknown correlation matrix.

Under homoscedasticity and a difference-in-means working model, the following approximations are natural.

First, for the overall effect and a subgroup effect,
\[
\mathrm{Corr}(\tau_{s,0,0},\tau_{s,k,l})
\approx
\Pr_{P_s}(X_k=x_{k,l})\,
\frac{\sigma_{s,k,l}}{\sigma_{s,0,0}}.
\]
Second, for two disjoint strata of the same covariate,
\[
\mathrm{Corr}(\tau_{s,k,l},\tau_{s,k,l'})=0
\qquad (l\neq l').
\]
Third, for subgroup effects defined by two different covariates,
\[
\mathrm{Corr}(\tau_{s,k,l},\tau_{s,k',l'})
\approx
\frac{
\Pr_{P_s}(X_k=x_{k,l},\,X_{k'}=x_{k',l'})
}{
\sqrt{
\Pr_{P_s}(X_k=x_{k,l})\Pr_{P_s}(X_{k'}=x_{k',l'})
}
}.
\]
When the overlap proportions are unavailable from the trial, they may be estimated by exponential tilting:
\[
\Pr_{P_s}(X_k=x_{k,l},\,X_{k'}=x_{k',l'})
=
\mathbb E_Q\!\left[
w_s(X,\eta_s)
I(X_k=x_{k,l})I(X_{k'}=x_{k',l'})
\right],
\]
with empirical plug-in estimator
\[
n_q^{-1}\sum_{i\in\mathcal I_q}
w_s(X_i,\widehat\eta_s)
I(X_{i,k}=x_{k,l})I(X_{i,k'}=x_{k',l'}).
\]

For \(\widehat\Sigma_s^\mu\), the diagonal entries are often directly computable from reported summary statistics, whereas the off-diagonal entries depend on the joint distribution of the covariates and are usually unavailable. A natural plug-in estimator based on exponential tilting is
\[
\widehat\Sigma_s^\mu
=
n_q^{-1}\sum_{i\in\mathcal I_q}
w_s(X_i,\widehat\eta_s)
\{h_s^+(X_i)-\widehat\mu_s^+\}
\{h_s^+(X_i)-\widehat\mu_s^+\}^T.
\]

Finally, \(\widehat\Sigma_s^{\tau\mu}\) is generally not reported. For difference-in-means estimators, the concordant pairing \(\mathbb E_{P_s}[\phi_{s,k,l}^\tau(O)\phi_{s,k}^\mu(X)^T]\) is exactly zero, while the discordant pairing \(\mathbb E_{P_s}[\phi_{s,k,l}^\tau(O)\phi_{s,r}^\mu(X)^T]\) for \(r\neq k\) reduces to a between-arm difference of within-subgroup cross-covariances. We therefore adopt the working approximation
\[
\Sigma_s^{\tau\mu}\approx 0.
\]
Under this approximation,
\[
\widehat\Gamma_s
\approx
J_{\tau_s}^m\widehat\Sigma_s^\tau(J_{\tau_s}^m)^T
+
\widehat A_s\widehat\Sigma_s^\mu\widehat A_s^T.
\]

\clearpage
\section{SIMULATION STUDY}

\subsection{Parameters in the simulation study}
\label{appendix7}

The simulation study is based on the following parameters' values. We adopted some convenient values for the logistic model, e.g., $\log(2)$/$\log(0.5)$ or $\log(1.25)$/$\log(0.8)$, for the coefficients. For each group of parameters, we used two different combinations of those values. For the overall trial population selection, we preferred a relatively strong selection, while for the trial's allocation, we used a weak selection to represent the similar eligibility criteria in trials with similar design and same treatment comparison.

Parameters' values for 5-trial setting:

\begin{enumerate}
    \item The number of covariates: $K = 3$.
    \item The sample size of overall population: $n = 5000$.
    \item The number of trials: $m = 5$.
    \item The parameters for joint distribution $\eta = (p_1, p_2, \mu, \rho)$: 1) $\eta = (0.3, 0.3, 0, 0.3)$; 2) $\eta = (0.5, 0.5, 0, 0.5)$.
    \item The coefficients for the overall trial population selection: 1) $\beta = (\log(2), \log(0.5), \log(0.5), \log(0.5))$; 2) $\beta = (\log(0.8), \log(2), \log(2), \log(2))$.
    \item The coefficients for the trial's allocation: 1) $\gamma_{s2} = (\log(2), \log(0.5), \log(2), \log(0.5))$, $\gamma_{s3} = (\log(2), \log(0.8), \log(1.25), \log(0.8))$, $\gamma_{s4} = (\log(2), \log(0.5), \log(2), \log(0.5))$, $\gamma_{s5} = (\log(2), \log(0.8), \log(1.25), \log(0.8))$; 2) $\gamma_{s2} = (\log(2), \log(2), \log(0.5), \log(2))$, $\gamma_{s3} = (\log(2), \log(1.25), \log(0.8), \log(1.25))$, $\gamma_{s4} = (\log(2), \log(2), \log(0.5), \log(2))$, $\gamma_{s5} = (\log(2), \log(1.25), \log(0.8), \log(2))$.
    \item The coefficients for the potential outcome models: 1) $\theta(1) = (\log(0.5), \log(2), \log(0.5), \log(1.25))$ and $\theta(0) = (\log(0.5), \log(0.5), \log(2), \log(0.8))$; 2) $\theta(1) = (\log(0.5), \log(1.25), \log(0.8), \log(1.1))$ and $\theta(0) = (\log(0.5), \log(0.8), \log(1.25), \log(0.9))$.
\end{enumerate}

Parameters' values for single-trial setting:

\begin{enumerate}
    \item The number of covariates: $K = 3$.
    \item The sample size of overall population: 1) $n = 1000$; 2) $n = 2000$.
    \item The number of trials: $m = 1$.
    \item The parameters for joint distribution $\eta = (p_1, p_2, \mu, \rho)$: 1) $\eta = (0.3, 0.3, 0, 0.3)$; 2) $\eta = (0.5, 0.5, 0, 0.5)$.
    \item The coefficients for the overall trial population selection: 1) $\beta = (\log(1.2), \log(0.5), \log(0.5), \log(0.5))$; 2) $\beta = (\log(0.5), \log(2), \log(2), \log(2))$.
    \item The coefficients for the potential outcome models: 1) $\theta(1) = (\log(0.5), \log(2), \log(0.5), \log(1.25))$ and $\theta(0) = (\log(0.5), \log(0.5), \log(2), \log(0.8))$; 2) $\theta(1) = (\log(0.5), \log(1.25), \log(0.8), \log(1.1))$ and $\theta(0) = (\log(0.5), \log(0.8), \log(1.25), \log(0.9))$.
\end{enumerate}

\subsection{Simulation results}
\label{appendix8}

The following tables present the simulation results for the 5-trial and single-trial settings across 16 scenarios: Tables \ref{table-sim5bias}, \ref{table-sim5coverage}, and \ref{table-sim5MSE} report bias and variance, coverage, and MSE, respectively, for the 5-trial setting; Tables \ref{table-sim1bias} and \ref{table-sim1MSE} report bias, variance, coverage, MSE, and MAE for the single-trial setting.

\clearpage
\begin{table}
\caption{Bias and variance results for 5-trial setting}
\renewcommand{\arraystretch}{1.5}
\centering
\begin{tabular}{lllllllll}
\toprule
scenarios & bias\_meta & bias\_metareg & bias\_CIMA & bias\_IPD & var\_meta & var\_metareg & var\_CIMA & var\_IPD \\
\midrule
1         & -0.0506    & -0.0464       & -0.0018    & 3e-4      & 3e-4      & 0.9126       & 9e-4      & 5e-4     \\
2         & -0.0486    & 0.0521        & -0.0037    & -5e-4     & 4e-4      & 1.7992       & 0.001     & 6e-4     \\
3         & 0.0597     & 0.0207        & 0.0018     & -0.0012   & 4e-4      & 0.9024       & 8e-4      & 5e-4     \\
4         & 0.0607     & 0.0152        & 0.0041     & 9e-4      & 3e-4      & 1.1977       & 7e-4      & 5e-4     \\
5         & -0.0587    & 0.0057        & 1e-4       & 0.0014    & 3e-4      & 0.5147       & 8e-4      & 5e-4     \\
6         & -0.0598    & 0.0399        & -0.0025    & -7e-4     & 4e-4      & 0.4928       & 9e-4      & 6e-4     \\
7         & 0.0522     & 0.0088        & 0.0033     & 0.0012    & 3e-4      & 0.6461       & 9e-4      & 5e-4     \\
8         & 0.0502     & -0.0186       & 0.0033     & 0.001     & 3e-4      & 1.1994       & 9e-4      & 5e-4     \\
9         & -0.0244    & -0.0409       & 6e-4       & 0.0011    & 3e-4      & 1.2614       & 9e-4      & 5e-4     \\
10        & -0.0248    & 0.0302        & 1e-4       & 3e-4      & 4e-4      & 1.1647       & 0.001     & 5e-4     \\
11        & 0.0266     & 0.0256        & 6e-4       & -4e-4     & 3e-4      & 0.7811       & 7e-4      & 5e-4     \\
12        & 0.028      & 0.014         & 0.0013     & 2e-4      & 3e-4      & 0.9053       & 7e-4      & 5e-4     \\
13        & -0.0265    & -0.0112       & 7e-4       & 0.001     & 3e-4      & 1.0711       & 7e-4      & 5e-4     \\
14        & -0.0252    & 0.0021        & -7e-4      & 8e-4      & 4e-4      & 0.8038       & 8e-4      & 6e-4     \\
15        & 0.0261     & 0.0392        & 0.001      & 5e-4      & 3e-4      & 0.7302       & 9e-4      & 5e-4     \\
16        & 0.0249     & -0.0054       & 0.0016     & 1e-4      & 3e-4      & 0.5564       & 9e-4      & 5e-4     \\
\bottomrule
\end{tabular}
\label{table-sim5bias}
\end{table}

\clearpage
\begin{table}
\caption{Coverage results for 5-trial setting}
\renewcommand{\arraystretch}{1.5}
\centering
\begin{tabular}{lllll}
\toprule
scenarios & coverage\_meta & coverage\_metareg & coverage\_CIMA & coverage\_IPD \\
\midrule
1         & 0.591          & 0.973             & 0.971          & 0.942         \\
2         & 0.717          & 0.956             & 0.972          & 0.94          \\
3         & 0.573          & 0.971             & 0.941          & 0.948         \\
4         & 0.59           & 0.967             & 0.958          & 0.948         \\
5         & 0.474          & 0.963             & 0.968          & 0.944         \\
6         & 0.592          & 0.971             & 0.952          & 0.939         \\
7         & 0.625          & 0.959             & 0.959          & 0.96          \\
8         & 0.668          & 0.96              & 0.962          & 0.947         \\
9         & 0.795          & 0.953             & 0.973          & 0.956         \\
10        & 0.8            & 0.967             & 0.971          & 0.954         \\
11        & 0.777          & 0.966             & 0.948          & 0.954         \\
12        & 0.753          & 0.958             & 0.963          & 0.952         \\
13        & 0.768          & 0.964             & 0.966          & 0.958         \\
14        & 0.83           & 0.961             & 0.952          & 0.949         \\
15        & 0.787          & 0.96              & 0.959          & 0.947         \\
16        & 0.77           & 0.964             & 0.964          & 0.944         \\
\bottomrule
\end{tabular}
\label{table-sim5coverage}
\end{table}

\clearpage
\begin{table}
\caption{MSE and MAE results for 5-trial setting}
\renewcommand{\arraystretch}{1.5}
\centering
\begin{tabular}{llllllll}
\toprule
MSE\_meta & MSE\_metareg & MSE\_CIMA & MSE\_IPD & MAE\_meta & MAE\_metareg & MAE\_CIMA & MAE\_IPD \\
\midrule
0.0029    & 0.9138       & 9e-4      & 5e-4     & 0.0506    & 0.5041       & 0.0242    & 0.0184   \\
0.0028    & 1.8001       & 0.001     & 6e-4     & 0.0488    & 0.6461       & 0.0248    & 0.019    \\
0.0039    & 0.9019       & 8e-4      & 5e-4     & 0.0597    & 0.477        & 0.0227    & 0.0186   \\
0.004     & 1.1967       & 7e-4      & 5e-4     & 0.0607    & 0.55         & 0.0211    & 0.018    \\
0.0038    & 0.5142       & 8e-4      & 5e-4     & 0.0587    & 0.4247       & 0.0218    & 0.0181   \\
0.0039    & 0.4939       & 9e-4      & 6e-4     & 0.0598    & 0.4254       & 0.0242    & 0.0194   \\
0.0031    & 0.6455       & 9e-4      & 5e-4     & 0.0523    & 0.4076       & 0.0244    & 0.0181   \\
0.0028    & 1.1985       & 9e-4      & 5e-4     & 0.0502    & 0.5043       & 0.0239    & 0.0178   \\
9e-4      & 1.2619       & 9e-4      & 5e-4     & 0.0258    & 0.5303       & 0.0234    & 0.0174   \\
0.001     & 1.1645       & 9e-4      & 5e-4     & 0.0266    & 0.6027       & 0.0247    & 0.0187   \\
0.001     & 0.7809       & 7e-4      & 5e-4     & 0.0278    & 0.4945       & 0.0215    & 0.0184   \\
0.0011    & 0.9046       & 7e-4      & 5e-4     & 0.0289    & 0.5271       & 0.021     & 0.0177   \\
0.001     & 1.0701       & 7e-4      & 5e-4     & 0.0276    & 0.4505       & 0.0212    & 0.018    \\
0.001     & 0.803        & 8e-4      & 6e-4     & 0.0269    & 0.447        & 0.023     & 0.0192   \\
0.001     & 0.731        & 9e-4      & 5e-4     & 0.0275    & 0.4054       & 0.0239    & 0.0181   \\
9e-4      & 0.5559       & 9e-4      & 5e-4     & 0.0259    & 0.4331       & 0.0236    & 0.0175   \\
\bottomrule
\end{tabular}
\label{table-sim5MSE}
\end{table}

\clearpage
\begin{table}
\caption{Bias, variance, and coverage results for single-trial setting}
\renewcommand{\arraystretch}{1.5}
\centering
\begin{tabular}{lllllll}
\toprule
scenarios & bias\_CIMA & bias\_IPD & var\_CIMA & var\_IPD & coverage\_CIMA & coverage\_IPD \\
\midrule
1         & 3e-4       & -0.0016   & 0.0049    & 0.0041   & 0.971          & 0.949         \\
2         & 9e-4       & 2e-4      & 0.0024    & 0.0019   & 0.973          & 0.958         \\
3         & -0.0031    & -0.0036   & 0.006     & 0.0047   & 0.972          & 0.945         \\
4         & 0.0018     & 2e-4      & 0.0028    & 0.0021   & 0.966          & 0.963         \\
5         & -0.0017    & -0.0017   & 0.005     & 0.0037   & 0.948          & 0.945         \\
6         & -5e-4      & -7e-4     & 0.0024    & 0.0019   & 0.946          & 0.944         \\
7         & -0.002     & -1e-4     & 0.0042    & 0.0032   & 0.964          & 0.954         \\
8         & -9e-4      & -1e-4     & 0.0021    & 0.0017   & 0.965          & 0.941         \\
9         & -0.0011    & -0.0012   & 0.0045    & 0.0038   & 0.971          & 0.954         \\
10        & 4e-4       & -4e-4     & 0.0021    & 0.0018   & 0.969          & 0.956         \\
11        & 0.0016     & 0.0014    & 0.0061    & 0.0049   & 0.966          & 0.955         \\
12        & 0.0025     & 0.0024    & 0.0027    & 0.0022   & 0.964          & 0.958         \\
13        & -0.0035    & -0.0037   & 0.0045    & 0.0036   & 0.945          & 0.953         \\
14        & 0.004      & 0.0031    & 0.0023    & 0.0018   & 0.936          & 0.946         \\
15        & -0.0032    & -0.0022   & 0.0042    & 0.0032   & 0.956          & 0.952         \\
16        & -4e-4      & 3e-4      & 0.0019    & 0.0016   & 0.966          & 0.952         \\
\bottomrule
\end{tabular}
\label{table-sim1bias}
\end{table}

\clearpage
\begin{table}
\caption{MSE and MAE results for single-trial setting}
\renewcommand{\arraystretch}{1.5}
\centering
\begin{tabular}{lllll}
\toprule
scenarios & MSE\_CIMA & MSE\_IPD & MAE\_CIMA & MAE\_IPD \\
\midrule
1         & 0.0049    & 0.0041   & 0.0554    & 0.0509   \\
2         & 0.0024    & 0.0019   & 0.0397    & 0.0356   \\
3         & 0.0061    & 0.0047   & 0.062     & 0.0546   \\
4         & 0.0028    & 0.0021   & 0.0423    & 0.0368   \\
5         & 0.005     & 0.0037   & 0.0557    & 0.0487   \\
6         & 0.0024    & 0.0019   & 0.0387    & 0.0341   \\
7         & 0.0042    & 0.0031   & 0.0517    & 0.0445   \\
8         & 0.0021    & 0.0017   & 0.0365    & 0.0331   \\
9         & 0.0045    & 0.0038   & 0.0534    & 0.0486   \\
10        & 0.0021    & 0.0018   & 0.0362    & 0.0336   \\
11        & 0.0061    & 0.0049   & 0.0622    & 0.0562   \\
12        & 0.0027    & 0.0022   & 0.0411    & 0.0374   \\
13        & 0.0045    & 0.0037   & 0.0535    & 0.0479   \\
14        & 0.0023    & 0.0018   & 0.0382    & 0.034    \\
15        & 0.0042    & 0.0032   & 0.0512    & 0.0441   \\
16        & 0.0019    & 0.0016   & 0.0349    & 0.0319   \\
\bottomrule
\end{tabular}
\label{table-sim1MSE}
\end{table}

\end{appendices}

\end{doublespace}
\end{document}